\keywords{Event Structures, Parallel Causes, Causal Unfolding, Probability}
\def\eg{{\em e.g.}}
\def\ie{{\em i.e.}}
\def\viz{{\em viz.}}
\newcommand{\ve}[1]{*+[o][F]{#1}}
\def\pb#1{\save[]+<16 pt,0 pt>:a(#1)\ar@{pb{}}[]\restore}
\newcommand{\rstd}{\mathbin\upharpoonright}
\newcommand{\unamb}{{\it unamb}}
\newcommand{\col}{{\it col}}
\newcommand{\fame}{{\it fam}}
\newcommand{\ese}{\textrm{ese}}
\newcommand{\Ese}{\textrm{Ese}}
\newcommand{\er}{\textit{er}}
\newcommand{\edc}{\textrm{edc}}
\newcommand{\ef}{\textrm{ef}}
\newcommand{\ESE}{{\PES}_\equiv}
\newcommand{\PES}{\mathcal{E}}
\newcommand{\FAM}{\mathcal{F}\!am}
\newcommand{\SFAM}{\mathcal{S\!F}\!am}
\newcommand{\FAME}{\mathcal{F}\!am_\equiv}
\newcommand{\SFAME}{\mathcal{S\!F}\!am_\equiv}
\newcommand{\GES}{\mathcal{G}}
\newcommand{\EDC}{\mathcal{E\!D\!C}}
\newcommand{\Equiv}{\textrm{Equiv}}
\newcommand{\A}{{\mathcal A}}
\newcommand{\B}{{\mathcal B}}
\newcommand{\C}{{\mathcal C}}
\newcommand{\D}{{\mathcal D}}
\newcommand{\F}{{\mathcal F}}
\newcommand{\Con}{\textrm{Con}}
\newcommand{\id}{\textrm{id}}
\newcommand{\eeq}{\equiv}
\newcommand{\longcov}[1]{{\stackrel{#1}{\mathrel-\joinrel\relbar\joinrel\subset\,}}}
\renewcommand{\max}{\textit{top}}
\newcommand{\setdif}{\setminus}
\newcommand{\eps}{\, \epsilon}
\newcommand{\fsubseteq}{\subseteq_{\textrm{fin}}}
\newcommand{\conf}[1]{\:\!{\mathcal{C}}(#1)}
\newcommand{\iconf}[1]{\:\!{\mathcal{C}}^\infty(#1)}
\newcommand{\parrow}{\rightharpoonup}
\newcommand{\set}[2]{{\{  #1\  | \  #2 \} }}
\newcommand{\setof}[1]{{\{ #1 \} }}
\newcommand{\eqdef}{\mathrel{=_{\mathrm{def}}}}
\newcommand{\iso}{\cong}
\begin{document}

  \title{Causal Unfoldings and Disjunctive Causes}
  
  \author[M.~de Visme]{Marc de Visme}[a]	
  \address{Université Paris-Saclay, CNRS, ENS Paris-Saclay, Inria, LMF, France}	
  \email{marc.de-visme@inria.fr}  
  
  \author[G.~Winskel]{Glynn Winskel}[b]	
  \address{
  Edinburgh Research Centre, Central Software Institute, Huawei; 
University of Strathclyde, UK
}	
  \email{gw104@cl.cam.ac.uk}  
  
 
\begin{abstract}
In the simplest form of event structure, a prime event structure, an event is associated with a unique causal history, its prime cause. However, it is quite common for an event to have disjunctive causes in that it can be enabled by any one of multiple sets of causes. Sometimes the sets of causes may be mutually exclusive, inconsistent one with another,  and sometimes not, in which case they coexist consistently and constitute parallel causes of the event. The established model of general event structures can model parallel causes. On occasion however such a model abstracts too far away from the precise causal histories of events to be directly useful.  For example, sometimes one needs to associate probabilities with different, possibly coexisting, causal histories of a common event. Ideally, the causal histories of a general event structure would correspond to the configurations of its causal unfolding to a prime event structure; and the causal unfolding would arise as a right adjoint to the embedding of prime in general event structures. But there is no such adjunction.  However, a slight extension of  prime event structures remedies this defect and provides a causal unfolding as a universal construction.  Prime event structures are extended with an equivalence relation in order to dissociate the two roles, that of an event and its enabling; in effect, prime causes are labelled by a disjunctive event, an equivalence class of its prime causes.  With this  enrichment a suitable causal unfolding appears as a pseudo right adjoint. The adjunction relies critically on the central and subtle notion of extremal causal realisation as an embodiment of causal history. Finally, we explore subcategories which support parallel causes as well the key operations needed in developing 
probabilistic distributed strategies with parallel causes.  
 \end{abstract}

\maketitle

\section{Introduction}
Work on probabilistic distributed strategies based on event structures brought us face to face with a limitation in existing models of concurrent computation, and in particular with the theory of event structures as it had been developed.  In order to adequately express certain intuitively natural, optimal probabilistic strategies, it was necessary to simultaneously support: probability on event structures with opponent moves, itself rather subtle;  parallel causes, in which an event may be enabled in several distinct but compatible ways; and a hiding operation crucial in the composition of strategies.  The difficulties did not show up in the less refined development of nondeterministic strategies; there the simplest form of event structure, prime event structures, sufficed.  
The ``obvious'' remedy, to base strategies on more general event structures, which do support parallel causes, failed to support probability and hiding adequately.\footnote{In the context of games, {\em deterministic} general event structures do support parallel causes and hiding so strategies with parallel causes such as that of parallel-or, but of course not the underlying nondeterminism required for probabilistic strategies~\cite{fscd17,ecsym-notes}.}
The problems and a solution are documented in
the article~\cite{CSL17}.

That work uncovered a central construction, which we here call the {\em causal unfolding} of a model with parallel causes.  It  is based on the notion of  {\em extremal causal realisation} and attendant {\em prime  extremal realisation} which plays a role analogous to that of complete prime in distributive orders.  Both concepts deserve to be better known and are expanded on comprehensively with full proofs here.  As will  shortly be explained more fully, intuitively, a prime extremal realisation is a finite partial order expressing a minimal causal history for an event to occur, even in the presence of several parallel causes for the event. Extremal realisations provide us with a way to unfold a model supporting parallel causes (general event structures---Section~\ref{sec:GES}, or equivalence families---Section~\ref{sec:ESE})
  into a structure describing all its causal histories---its causal unfolding.  As is to be hoped, the unfolding will be a form of right adjoint giving the causal unfolding and extremal realisations a categorical significance.

To give an idea of  prime extremal realisations of events we give a short, necessarily informal, preview of two examples from the paper.  The simplest concerns a general event structure comprising three events $a$, $b$ and $d$ where $d$ can occur once $a$ or $b$ have occurred and where all events can occur together.  The events $a$ and $b$ constitute parallel causes of the event $d$.  We can picture the situation in the diagram:
\[
\xymatrix@R=1em@C=2em{
 &\ve{d}&\\
&{OR}&\\
 \ve{a} \ar@{|>}@/^1pc/[uur]
& &\ve{b} \ar@{|>}@/_1pc/[uul]
}
\]
Here there are two minimal causal histories associated with the occurrence of the event $d$, \viz~$d$ after $a$, and $d$ after $b$ : 
\[
\xymatrix@R=1em@C=2em{
 \ve{d}&&\ve{d}\\
& \hbox{ and }&\\
 \ve{a} \ar@{|>}[uu]
& &\ve{b} \ar@{|>}[uu]
}
\]
These will be the prime extremal realisations associated with the occurrence of $d$.  
 But this example is deceptively simple.  To add a level of difficulty, consider the general event structure 
 \[
\xymatrix@R=1em@C=2em{
{} &  \ve{d} & {}\\
{}&{}&{}\\
{}& \ve{c} \ar@{|>}[uu]|{\text{\tiny AND}} &{}\\
{}& {\text{\tiny OR}}&{}\\
\ve{a} \ar@{|>}[uur] \ar@{|>}@/^1pc/[uuuur] & {} & \ve{b} \ar@{|>}[uul] \ar@{|>}@/_1pc/[uuuul] \\ 
       }
\]
which portrays an event $d$ enabled through the occurrence of all of the events $a$, $b$ and $c$ but where $c$ is enabled by either $a$ or $b$.  This time the  two minimal causal histories associated with the occurrence of the event $d$, one after $c$ caused by $a$, and the other after $c$ caused by $b$, give rise to the two prime extremal realisations: 
\[\xymatrix@R=1em@C=2em{
\ve{d}&&  {} &\\
{}&&{}&\\
\ve{c} \ar@{|>}[uu]&&  {}&\\
{}&&{}&\\
\ve{a} \ar@{|>}[uu] && \ve{b} \ar@{|>}[uuuull]&\\
        }
\xymatrix@R=1em@C=0em{
\\
\\
\hbox{and}\\
\\
\\
        }
\xymatrix@R=1em@C=2em{
&{}&& \ve{d} \\
&{}&&{}\\
&{}&& \ve{c} \ar@{|>}[uu]\\
&{}&&{}\\
& \ve{a} \ar@{|>}[uuuurr] && \ve{b} \ar@{|>}[uu]\\
        }\]
There are also more subtle `non-injective' prime extremal realisations in which the same event of a general event structure occurs in several different ways---see Example~\ref{ex:nonedc}, though these have been ruled out in our application to strategies with parallel causes~\cite{CSL17}. 
  
The new adjunction, with its right adjoint the causal unfolding,  supplies a missing link in the landscape of models for concurrency~\cite{WN}. The adjunction   
connects models with parallel causes, such as general event structures,  to those based on partial orders of events. It does this through the introduction of a simple, new model which is based on prime event structures extended with an equivalence relation on their sets of events. In its most general form the adjunction relates two new models, prime event structures with an equivalence relation to families of configurations with a similar equivalence.  We explore how the adjunction restricts and simplifies on subcategories, in particular, between {\em event structures with disjunctive causes (\edc's)} and {\em stable families with equivalence}.  Via the simplified adjunction we show that the category of \edc's has the constructions pullback, pseudo pullback and factorisation needed to develop distributed probabilistic strategies with parallel causes~\cite{CSL17}.  We point the reader to the figure of the Conclusion; it summarises the adjunctions relating the models we encounter and develop in the article. 

 More broadly, often  in  systems with parallel causes it   
is necessary to associate probabilities with causal histories, and the causal unfolding provides a suitable structure on which to do this systematically~\cite{CSL17}. Outside probability, there is a similar need for causal unfoldings, for example, when reversible computing encounters parallel causes~\cite{ioana,CKV}, and in extracting biochemical pathways, forms of causal history in biochemical systems where parallel causes are rife~\cite{kappa}.

\section{Event structures and their maps} 
We briefly review two well-established forms of event structure and explain the absence of an adjunction associated with the embedding of prime into general event structures. It is through such an adjunction one might otherwise have thought to find a causal unfolding of general event structures to prime event structures. The absence motivates a new model  
based on prime event structures with an equivalence relation. 
(We refer the reader to~\cite{thesis,evstrs} in particular for background and intuitions.)

\subsection{Prime event structures}\label{sec:PES}
A {\em prime event structure} comprises $(E, \leq, \Con)$,   consisting of a set $E$ of {\em events} which are partially ordered by $\leq$, the {\em causal dependency relation}, and a  non-empty {\em consistency relation} $\Con$ consisting of finite subsets of $E$.  The relation $e'\leq e$ expresses that event $e$ causally depends on the previous occurrence of event $e'$.  Write $[X]$ for the $\leq$-down-closure of a subset of events $X$. That a finite subset of events is consistent conveys that its events can occur together by some stage in the evolution of the process.  Together the relations satisfy several  axioms: 
\begin{center}
\begin{tabular}{l}
$[e]= \set{e'}{e'\leq e}$ is finite, for all $e\in E$,\\
$\setof{e}\in\Con$, for all $e\in E$,\\
$X\subseteq Y\in\Con \implies X\in \Con$, and \\
$X\in\Con \ \&\  e\leq e'\in X \implies X\cup\setof{e}\in\Con$.
\end{tabular}
\end{center}
A {\em configuration} is a, possibly infinite, set of events $x\subseteq E$ which is: 
{\em consistent,} $X\subseteq x  \hbox{ and }  X \hbox{ is finite}  \hbox{ implies } X\in\Con$\,; and {\em down-closed, } $[x]= x$.
It is part and parcel of prime event structures that an event   $e$ is associated with a unique causal history $[e]$.  

Prime event structures have a long history.  They first appeared in describing the patterns of event occurrences that occurred in the unfolding  of a (1-safe) Petri net~\cite{NPW}.  As their configurations, ordered by inclusion, form a Scott domain, prime event structures provided an early bridge between the semantic theories of Dana Scott and Carl Petri; one early result being that a {\em confusion-free} Petri net unfolded to a prime event structure with configurations taking the form of a {\em concrete} domain, as defined by Kahn and Plotkin.  Generally, the configurations of a countable prime event structure ordered by inclusion coincide with the dI-domains of Berry---distributed Scott domains which satisfy a finiteness axiom~\cite{evstrs}.   The domains of configuration of a prime event structure had been characterised earlier in~\cite{NPW} as {\em prime algebraic} domains, Scott domains with a subbasis of complete primes.\footnote{A {\em complete prime} in an order which supports least upper bounds $\bigsqcup X$ of compatible subsets $X$ is an element $p$ such that $p\sqsubseteq \bigsqcup  X$ implies $p\sqsubseteq x$ for some $x\in X$. In the configurations of a prime event structure the complete primes are exactly those configurations $[e]$ for an event $e$.}

 \subsection{General event structures}\label{sec:GES}

A  {\em general event structure}~\cite{thesis,evstrs} permits an event to be caused disjunctively in several ways, possibly coexisting in parallel, as parallel causes. A general event structure comprises $(E, \Con, \vdash)$  where $E$ is a set of events, the consistency relation $\Con$ is a non-empty collection of finite subsets of $E$, and the {\em enabling relation} ${\vdash}$ is a relation in $\Con\times E$ such that
\[\begin{matrix}
X\subseteq Y\in \Con \implies X\in\Con\,, \hbox{ and} \\
Y\in\Con\ \&\ Y\supseteq X \ \& \ X\vdash e \implies Y\vdash e\,.
\end{matrix}\]
A {\em configuration} is a subset $x$ of $E$ which is:  {\em consistent,}  $X\fsubseteq x \implies X\in\Con$; and {\em secured, }$\forall e\in x \exists  e_1, \cdots, e_n\in x.\ \ e_n=e\ \ \& \  \forall i\leq n.  \setof{e_1, \cdots, e_{i-1}}\vdash~e_i\,$. 
We write $\iconf E$ for the configurations of $E$ and $\conf E$ for its finite configurations.  (For illustrations of small general event structures see, for instance, Example~\ref{ex:docs} and $E_0$ of Example~\ref{Ex1}.
)

An event $e$ being enabled in a configuration has been expressed through the existence of a securing chain $e_1, \cdots, e_n$, with $e_n =e$, within the configuration. The chain represents a {\em complete enabling} of $e$ in the sense that every event in the chain is itself enabled by earlier members of the chain.  
Just as mathematical proofs are most usefully viewed not merely as
sequences, so 
later complete enablings  expressed more generally as partial orders
---``causal realisations''---will play a central role.  

A {\em map} $f:(E,\Con,\vdash)\to (E',\Con',\vdash')$ of general event structures  is a partial function $f:E\parrow E'$ such that 
 \[ \begin{matrix*}[l]
 \forall X\in\Con\,.\  fX \in \Con' \,, \\
 \forall X \in\Con, e_1, e_2\in X.\ f(e_1)=f(e_2)  \text{ (both defined)} \implies e_1=e_2\,, \hbox{ and }\\
  \forall  X \in\Con, e\in E.\ X\vdash e \ \&\ f(e)\text{  is defined }\implies fX\vdash' f(e)\,. \end{matrix*}
\]
 Maps compose as partial functions. Write $\GES$ for the category of general event structures. 

 W.r.t.~a family of sets $\F$, a subset $X$ of $\F$ is {\em compatible} (in $\F$), if there is $y\in \F$ such that $x\subseteq y$ for all $x\in X$; in particular $\setof{x,y}$ is compatible if there is $z \in \F$ such that $x,y \subseteq z$. Say a 
 subset is {\em finitely compatible} iff every finite subset is compatible.

We can now characterise those families of configurations arising from a  general event structure~\cite{evstrs}. 
A {\em family of configurations} comprises a non-empty family $\F$ of sets  such that if $X\subseteq\F$ is finitely compatible in $\F$ then $\bigcup X\in \F$; and if $e\in x\in \F$ there is a securing chain $e_1, \cdots, e_n=e$ in $x$ such that $\setof{e_1, \cdots, e_i} \in \F$ for all $i\leq n$.\footnote{The latter condition is equivalent to: (i) if $e\in x\in \F$ there is a finite $x_0\in \F$ s.t.~$e\in x_0\in \F$ and (ii) (coincident-freeness)  for distinct
$e, e'\in x$, there is $y\in \F$ with $y\subseteq x$ s.t.~$e\in y \iff e'\not\in y$.}
 Its {\em events} are elements of the underlying set $\bigcup \F$. 

A {\em map} between families of configurations from $\A$ to $\B$ is a partial function $f:\bigcup\A \parrow \bigcup\B$ between their events such that when $x\in\A$  then $f x\in\B$ and any event of $f x$  arises as the image of a unique event of $x$, \ie~the following {\em local injectivity} is fulfilled:
\[
e_1, e_2 \in x\ \&\ f(e_1) = f(e_2) \hbox{ (both defined) } \implies e_1 = e_2\,.
\]
 Maps compose as partial functions. Write $\FAM$ for the category of families of configurations. In Section~\ref{secn:stableefs}, we shall meet the subcategory $\SFAM$ of {\em stable families} of configurations, with objects $\A$ of $\FAM$ that satisfy
\[\forall x,y,z\in\A.\ \ x, y\subseteq z \     \Rightarrow\  x\cap y\in \A\,,\]
which  plays an important role in constructions on prime event structures~\cite{icalp82,evstrs}.

Characterisations of the orders obtained from the configurations of a general event structure can be found in~\cite{thesis}.\footnote{Complete irreducibles are the customary generalisation of complete primes to nondistributive orders such as those of configurations of general event structures ordered by inclusion~\cite{thesis}. A {\em complete irreducible} in an order which supports least upper bounds $\bigsqcup X$ of compatible subsets $X$ is an element $r$ such that $r=\bigsqcup  X$ implies $r= x$ for some $x\in X$. In the configurations of a general event structure the complete irreducibles are exactly those minimal configurations which contain an event $e$.
A forewarning: only in very special circumstances will prime  extremal realisations---the generalisation of complete prime of this paper---coincide with complete irreducibles---see Example~\ref{Ex1}.}

\subsection{A coreflection and non-coreflection}
There is a forgetful functor  $\GES\to \FAM$ taking a general event structure to its family of configurations. It has a left adjoint, which constructs a canonical general event structure from a family:  given $\A$, a family of configurations with underlying events $A$,   construct a general event structure $(A,\Con, \vdash)$ with $X\in \Con$ iff $X\fsubseteq y$, for some $y\in \A$; and with $X\vdash a$ iff $a\in A$, $X\in\Con$ and $a\in y  \subseteq X\cup\setof a$, for some $y\in \A$. The above yields a coreflection\footnote{A coreflection is an adjunction where the left adjoint is full and faithful, or equivalently the unit is iso.}
\[
\xymatrix{
 \FAM \ar@/_/[rr]_{}^{\top} &&   \GES  \ar@/_/[ll]_{} 
}\]
 of families of configurations in general event structures.  It cuts down to an equivalence between families of configurations and {\em replete} general event structures. A general event structure $(E,\Con, \vdash)$ is {\em replete} iff 
 \[
 \begin{matrix*}[l]
 \forall e\in E\,\exists X\in\Con.\ X\vdash e \  \,,\\
\forall X\in\Con\,\exists x\in\conf E.\  X\subseteq x \   \text{ and }  \\
 X\vdash e \implies \exists x\in \conf E.\ e\in x \ \&\ x\subseteq X\cup\setof e\,.
 \end{matrix*}
 \]
 
A {\em map} of prime event structures is a map of their families of configurations. Write $\PES$ for the category of prime event structures. (A map in $\PES$ need not preserve causal dependency; when it does and is total it is called {\em rigid}.) 
 
There is an obvious ``inclusion'' functor $\PES\to \FAM$ fully and faithfully embedding the category of prime event structures in the category of families of configurations and so in general event structures.
  We might expect the functor $\PES\to \FAM$ to be the left adjoint of a coreflection
\[\xymatrix{
\PES \ar@/_/[rr]_{}^{\top} &&  \FAM  \ar@{.>}@/_/[ll]_{\hbox{\bf ?}} \ar@/_/[rr]_{}^{\top} &&   \GES\,,  \ar@/_/[ll]_{} 
}\]
so yielding a composite right adjoint $\GES \to \PES$ which unfolds a  general event structure to a prime event structure~\cite{evstrs,WN}.
 However under reasonable assumptions this cannot exist, as the following example indicates.  

\begin{exa}\label{ex:docs}{\rm
Consider a general event structure comprising three events $a$, $b$ and $d$ with all subsets consistent and minimal enablings $\emptyset \vdash a, b$ and $\setof a\vdash d$ and $\setof b\vdash d$.  Imagine concurrent treatments $a$ and $b$ of two doctors which 
sadly lead to the death $d$ of the patient.
\[
\xymatrix@R=1em@C=2em{
 &\ve{d}&\\
&{OR}&\\
 \ve{a} \ar@{|>}@/^1pc/[uur]
& &\ve{b} \ar@{|>}@/_1pc/[uul]
}
\]
As its unfolding it is hard to avoid a prime event structure with events and causal dependency $a < d_a$ and $b < d_b$---the event $d_a$ representing ``death by $a$'' and the event $d_b$ ``death by $b$''---with the counit of the adjunction collapsing $d_a$ and $d_b$ to the common event $d$.  (If  
we are to apportion blame to the doctors we shall need the probabilities of $d_a$ and $d_b$ given $a$ and $b$~\cite{Pearl}.) 
In order for the counit to be a map we are forced to make $\setof{d_a, d_b}$ inconsistent. This is one issue: why should death by one doctor's treatment be in conflict with death by the other's---they could be jointly responsible?  But even more damningly the tentative counit fails the  universal property required of it!  Consider another prime event structure with three events comprising $a< d$ and $b<d$ (``death due to both doctors' treatments''). The obvious map to the family of configurations of the general event structure---the identity on events---fails to factor {\em uniquely} through the putative counit: $d$ can be sent to either $d_a$ or $d_b$; the event ``death by both doctors'' can be sent to either ``death by a'' or ``death by b.''   This raises the second issue: if we are to obtain the required universal property we have to regard these two maps as essentially the same. }
 \end{exa}

The two issues raised in the example suggest a common solution: to enrich prime event structures with   equivalence relations. This will allow a broader class of maps, settling the first issue, and introduce an equivalence on maps, settling the second.  The causal unfolding of the ``doctors example'' will be very simple and comprise the prime event structure  $a < d_a$ and $b < d_b$ with $d_a$ and $d_b$ equivalent events; with all events consistent.  In general the  construction of the unfolding is surprisingly involved; causal histories can be much more intricate than in the simple example.

\section{Events  with an equivalence, 
categories \texorpdfstring{$\ESE$}{E-equiv} and \texorpdfstring{$\FAME$}{Fam-equiv}}\label{sec:ESE}
  We build causal unfoldings in a new model, based on the obvious extension to events with an equivalence relation.
 A {\em (prime) event structure with 
equivalence} (an \ese)
is a structure
\[(P, \leq, \Con, \eeq)\]
where $(P, \leq, \Con)$ satisfies the axioms of a prime event structure and $\eeq$ is an equivalence relation on $P$.
The intention is that the events of $P$ represent {\em prime causes} while the $\eeq$-equivalence classes of $P$ represent {\em disjunctive events}: 
$p$ in $P$ is a prime cause of the event $\setof p_\eeq$.  Notice there may be several prime causes of the same event
and that these may be parallel causes in the sense that they are consistent with each other and causally independent.  
For the moment we do not impose any extra axioms.\footnote{Additional axioms will appear later motivated by results---see Section~\ref{sec:coreflnsinESE}. For example, although an \ese\, in general may have two equivalent events which are also
causally related, but this cannot hold in unfoldings based on extremal realisations.}

The extension by an equivalence relation on events is accompanied by an extension to families of configurations.
An {\em equivalence-family}  (ef) is a family of configurations $\A$ with an equivalence relation $\eeq_A$ on its underlying set $A \eqdef \bigcup\A$ (with no further axioms).  
Equivalence-families are the most general model we shall consider; they support parallel causes and, 
later, a causal unfolding.

Let $(\A, \eeq_A)$ and $(\B, \eeq_B)$ be ef's, with respective underlying sets $A$ and $B$.  A map $f: (\A, \eeq_A)\to(\B, \eeq_B)$ is a partial function $f:A\parrow B$ which  preserves $\eeq$, if $a_1\eeq_A a_2$ then either both $f(a_1)$ and $f(a_2)$ are undefined or both defined with $f(a_1)\eeq_B f(a_2)$, such that 
 \[x\in \A \implies f x\in \B\ \& \ 
\forall a_1, a_2 \in x. \   f(a_1)  \eeq_B f(a_2) \implies a_1\eeq_A a_2\,.
\]
Composition is composition of partial functions.  We regard two maps \[f_1,  f_2: (\A, \eeq_A)\to(\B, \eeq_B)\] as equivalent, and write $f_1\equiv f_2$,   iff they are equidefined and yield 
equivalent results, \ie\,
 if $ f_1(a)$ is defined then so is  $f_2(a)$ and $f_1(a) \eeq_B f_2(a)$, and  if $f_2(a$) is defined then so is  $f_1(a)$ and $f_1(a) \eeq_B f_2(a)$.  Composition respects $\equiv$. 
   This yields a 
  category of equivalence families $\FAME$; it is enriched in the category of sets with equivalence relations (also called setoids).\footnote{Appendix~\ref{app:equiv-enriched} provides background on categories enriched in equivalence relations.}

  Clearly from an ese~$(P, \eeq_P)$ we obtain an ef $(\iconf P, \eeq_P)$ and we take a map of ese's to be a map between their associated ef's.
  Write $\ESE$ for the category of \ese's; it too is enriched in the category of sets with equivalence relations. 
  When the equivalence relations $\eeq$ of  \ese's~are the identity we essentially have prime event structures and their maps. There is clearly a full-and-faithful embedding 
\[
\ESE \to \FAME\,,
\]
which preserves and reflects the equivalence on maps.  
  One virtue of \ese's is that they support a hiding operation, associated with a factorisation system~\cite{CSL17}. 
  
  We sometimes use an alternative description of their maps: 
  \begin{prop} \label{prop:charnmap}  A map of \ese's from $P$ to $Q$ is a
  partial function $f:P\parrow Q$  which preserves $\eeq$ such that 
\begin{enumerate}
\item[(i)]
for all $X\in \Con_P$ the direct image  
$fX \in \Con_Q$ and \\
$\forall p_1, p_2\in X. \ f(p_1)\eeq_Q  f(p_2) \implies p_1\eeq_P p_2$\,, and
\item[(ii)]
whenever $q\leq_Q f(p)$ there is  
$p'\leq_P p$ such that $f(p')=q$\,.
\end{enumerate}
\end{prop}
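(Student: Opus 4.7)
The plan is to verify this is an equivalent reformulation of the definition of map in $\ESE$, which proceeds via the embedding into $\FAME$. Since both formulations include preservation of $\eeq$ as a standing assumption, the work is to show that, for a partial function $f:P\parrow Q$ preserving $\eeq$, the pair of conditions (i)--(ii) is equivalent to the ef-map conditions, namely that $x\in\iconf P$ implies $fx\in\iconf Q$ together with the local-injectivity-up-to-equivalence clause $\forall a_1,a_2\in x.\ f(a_1)\eeq_Q f(a_2)\implies a_1\eeq_P a_2$. I would carry out the two directions separately, using the basic fact that in a prime event structure every $X\in\Con_P$ is contained in the configuration $[X]$, and in particular $[p]\in\iconf P$ for every $p\in P$.

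For the forward direction, assume $f$ is an ef-map. Given $X\in\Con_P$, its down-closure $[X]$ is a configuration of $P$, so $f[X]\in\iconf Q$; hence the finite subset $fX\subseteq f[X]$ is consistent, giving the first half of (i). The second half of (i) is immediate from local injectivity applied inside any configuration containing $X$. For (ii), take $q\leq_Q f(p)$; since $[p]\in\iconf P$ and $f[p]\in\iconf Q$ is down-closed, $q\in f[p]$, so $q=f(p')$ for some $p'\in[p]$, i.e.\ $p'\leq_P p$.

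For the converse, assume (i) and (ii). Local injectivity is obtained by applying (i) to the two-element consistent set $\setof{a_1,a_2}\subseteq x\in\iconf P$. To show $fx\in\iconf Q$, first note that any finite $Y\subseteq fx$ has the form $fX$ for some finite $X\subseteq x$, and such $X$ is in $\Con_P$; by (i) we conclude $Y\in\Con_Q$, so $fx$ is consistent. Down-closure follows from (ii): if $q\leq_Q q'$ and $q'=f(p')$ with $p'\in x$, then (ii) gives $p\leq_P p'$ with $f(p)=q$, and down-closure of $x$ yields $p\in x$, hence $q\in fx$. The only subtle point is that $fx$ need not be a configuration of a prime event structure in general (it could fail in $\FAME$), but for ese's the target is again an \ese, so ``configuration'' means consistent and down-closed, and both properties have just been established. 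The main obstacle, such as it is, is bookkeeping with the partial function: one has to be careful that when an image element $q'\in fx$ has several preimages in $x$, applying (ii) to any one of them still yields a preimage of $q$ inside $x$, which it does because down-closures of elements of $x$ remain inside $x$.
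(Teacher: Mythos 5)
Your proof is correct and is exactly the routine verification the paper has in mind (it states the proposition without proof): you reduce both characterisations to the fact that $[X]$ and $[p]$ are configurations of a prime event structure and that configurations of the target are precisely the consistent down-closed subsets. The forward direction via down-closures and the converse via two-element consistent sets and choice of finite preimages are the standard argument, and your handling of the partial-function and multiple-preimage bookkeeping is sound.
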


While an ese determines an ef, the converse, how to construct the causal unfolding of an ef to an ese, is much less clear.  To do so we follow up on the idea of Section~\ref{sec:GES} 
 of basing minimal complete enablings on partial orders.  A minimal complete enabling will correspond to a {\em prime extremal realisation}.  Realisations and extremal realisations are our next topic.  
 
 \section{Causal histories as extremal realisations}
  Extremal causal realisations formalise the notion of causal history in models with parallel causes, \viz~general event structures and the most general model of equivalence-families.  They will be the central tool in constructing the causal unfoldings of such models.    
  
  \subsection{Causal realisations}
 Let $\A$ be a family of configurations with underlying set $A$.  A {\em (causal) realisation} of $\A$ comprises a partial order $(R, \leq)$, its {\em carrier}, such that the set $\set{e'\in R}{e'\leq e}$ is finite for all events $e\in R$, together with a function $\rho:R\to A$ for which the image $\rho x \in \A$ when $x$ is a down-closed subset of $R$. We say a realisation is {\em injective}
 when it is injective as a function.

A map between realisations $(R,\leq), \rho $ and $(R',\leq'),\rho'$  is a partial surjective function $f:R\parrow R'$ which preserves down-closed subsets and satisfies $\rho(e) = \rho'(f(e))$ for all $ e\in R$ where $f(e)$ is defined. It  is convenient to write such a map as $\rho \succeq^f \rho'$. Occasionally we shall write $\rho \succeq \rho'$, or the converse $\rho' \preceq \rho$, to mean there is a map of realisations from $\rho$ to $\rho'$. 

A map of realisations $\rho \succeq^f \rho'$ factors into a ``projection'' followed by a total map 
\[\rho \succeq^{f_1}_1 \rho_0 \succeq^{f_2}_2 \rho'\,,\] 
where $\rho_0$ stands for the realisation $(R_0,\leq_0),\rho_0$ where $R_0 = \set{e\in R}{f(e) \hbox{ is defined}}\,$ is the domain of definition of $f$; $\leq_0$ is the restriction of $\leq$; $f_1$ is the inverse relation to the inclusion $R_0\subseteq R$;  and $f_2:R_0\to R'$ is the total part of function $f$. We are using $\succeq_1$ and $\succeq_2$ to signify the two kinds of maps. The $\succeq_1$-maps are reverse inclusions.  The $\succeq_2$-maps are exactly the total maps of realisations.  Total maps  $\rho \succeq_2^f \rho'$  are precisely those functions $f$ from the carrier of $\rho$ to the carrier of $\rho'$ which preserve down-closed subsets and satisfy $\rho = \rho' f$.

 \subsection{Extremal realisations}
Let $\A$ be a configuration family with underlying set $A$. We shall say a  realisation $\rho$ is {\em extremal} when $\rho \succeq_2^f \rho'$ implies $f$ is an isomorphism, for any realisation $\rho'$; it is called {\em prime extremal} when it in addition has a top element, \ie~its carrier contains an element which dominates all other elements in the carrier. Intuitively, an extremal realisation is a most economic
causal history associated with its image, a configuration of $\A$ ---it need not be unique; it is extremal in being a realisation with  minimal causal dependencies. 

Any realisation in $\A$ can be coarsened to an extremal realisation.

\begin{lem}\label{lem:existextr}  
For any realisation $\rho$ there is an extremal realisation $\rho'$ with $\rho\succeq_2^f \rho'$. 
\end{lem}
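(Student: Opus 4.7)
I would prove this by Zorn's lemma, applied to the collection of total-map coarsenings of $\rho$ under $\succeq_2$.

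Let $\mathcal{D}$ denote the collection of (iso classes of) realisations $\rho''$ equipped with a total map $\rho \succeq_2 \rho''$. Each such $\rho''$ is encoded, up to iso, by a $\rho$-respecting equivalence relation $\sim$ on the carrier $R$ of $\rho$ together with a partial order on $R/{\sim}$ making the quotient a map of realisations; in particular $\mathcal{D}$ is a set. Order $\mathcal{D}$ by $\succeq_2$. A $\succeq_2$-minimal element $\rho^*$ of $\mathcal{D}$ is automatically extremal: given any total map $\rho^* \succeq_2^g \rho''$, composition yields $\rho'' \in \mathcal{D}$ below $\rho^*$, and minimality forces $g$ to be an iso.

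The bulk of the work is to verify that every $\succeq_2$-descending chain in $\mathcal{D}$ has a lower bound. Given such a chain with corresponding increasing equivalences $(\sim_\alpha)$ on $R$, I would build the lower bound on the set-quotient $R_\omega \eqdef R/{\bigcup_\alpha \sim_\alpha}$. The map to the underlying set $A$ of $\A$ is inherited from $\rho$ (well-defined because $\bigcup_\alpha \sim_\alpha$ refines the fibres of $\rho$), and the partial order is declared by
\[ [r]_\omega \leq_\omega [r']_\omega \iff \forall s \sim_\omega r'.\ \exists t \leq s.\ t \sim_\omega r\,. \]
Reflexivity (take $t=s$) and transitivity (chase the defining quantifiers along $[a]\leq_\omega[b]\leq_\omega[c]$) are routine. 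Using that each $\sim_\alpha$ is itself order-congruent in this sense (which must be part of admissibility), one then checks that $r'\leq r$ in $R$ yields $[r']_\omega \leq_\omega [r]_\omega$, so that $q\colon R\to R_\omega$ preserves down-closed subsets and $q^{-1}$ of a $\leq_\omega$-down-closed set is down-closed in $R$; combined with $\rho$ being a realisation, this shows $\rho_\omega$ is a realisation. Similarly, each $R/{\sim_\alpha} \to R_\omega$ is a map of realisations, so $\rho_\omega$ is a common lower bound.

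The main obstacle is antisymmetry of $\leq_\omega$. If $[a]_\omega \leq_\omega [b]_\omega$ and $[b]_\omega \leq_\omega [a]_\omega$, pick any representative $r_b \sim_\omega b$ and iterate the two defining conditions: one obtains a descending sequence
\[ r_b \geq s_1 \geq s_2 \geq s_3 \geq \cdots \]
in $R$ with $s_{2k-1} \sim_\omega a$ and $s_{2k} \sim_\omega b$. This chain lives in $[r_b]_R$, which is finite by the standing axiom in the definition of a realisation, and hence must stabilise. A stable element $s_k$ witnesses $s_k \sim_\omega a$ and $s_k \sim_\omega b$, forcing $a \sim_\omega b$, i.e.\ $[a]_\omega = [b]_\omega$. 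This finiteness of down-closures is precisely where the realisation axiom earns its keep. With the lower-bound construction justified, Zorn's lemma supplies the $\succeq_2$-minimal $\rho^*$ below $\rho$, furnishing the desired extremal realisation with a total map $\rho \succeq_2^f \rho^*$.
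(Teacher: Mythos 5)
Your overall strategy is the paper's: Zorn's lemma over coarsenings of $\rho$, with the real work in producing a bound for a chain (the paper phrases this as a colimit of a maximal total-order diagram of $\succeq_2$-maps; your ``minimal element of the coslice'' bookkeeping is an equivalent packaging, and your minimality-implies-extremality step is fine). The gap is in the lower-bound construction, specifically in the definition of $\leq_\omega$. You define $[r]_\omega \leq_\omega [r']_\omega$ purely in terms of the original order $\leq$ on $R$ and the limit equivalence $\sim_\omega$, never consulting the orders $\leq_\alpha$ of the intermediate realisations. But a $\succeq_2$-map need only preserve down-closed subsets; it need not preserve or reflect order, and a coarsening typically \emph{deletes} causal dependencies without identifying any events --- that is the whole point of extremality. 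Take $\A$ to be all finite subsets of a countable set, $\rho$ injective on $R=\set{a_n, b_n}{n\in\omega}$ with $a_n\leq b_n$, and let $\rho_k$ be $\rho$ with the relations $a_n\leq b_n$ for $n\leq k$ erased. This is a strictly descending chain with every $\sim_\alpha$ the identity, so your $R_\omega$ is $R$ with $\leq_\omega\,=\,\leq$; then the identity function $R_k\to R_\omega$ does not preserve down-closed subsets ($\setof{b_1}$ is down-closed in $R_1$ but not in $R_\omega$), so your $\rho_\omega$ is not a lower bound for the chain. Relatedly, the ``order-congruence'' of each $\sim_\alpha$ that you invoke parenthetically is \emph{not} automatic for kernels of $\succeq_2$-maps (take $t\leq s$ with $s'$ incomparable to $t$, and a realisation identifying $s$ with $s'$ while separating $t$), and you cannot simply build it into ``admissibility'': restricting $\mathcal{D}$ to such coarsenings would break the step where minimality in $\mathcal{D}$ yields extremality against \emph{arbitrary} $\succeq_2$-maps.

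The repair is essentially what the paper does: the bound must be built from the whole chain, not from $R$ and $\sim_\omega$ alone. Its carrier is indeed the quotient you describe, but $e'\leq_\omega e$ should be read off from the intermediate orders at sufficiently late stages ($q_\alpha(e')\leq_\alpha q_\alpha(e)$ for all large $\alpha$). The finiteness axiom on realisations is what makes ``sufficiently late'' meaningful: since $[q_\beta(e)]_\beta$ is contained in the image of $[q_\alpha(e)]_\alpha$ for $\alpha\leq\beta$, the cardinalities of these down-sets are non-increasing natural numbers and hence stabilise along the chain, after which the down-set of $e$ is determined; this is also where finiteness earns its keep (your antisymmetry argument via a stabilising descending sequence in $[r_b]$ is correct for your relation, but it is securing the wrong relation). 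With the order so defined one checks that each $q_{\alpha\omega}$ preserves down-closed subsets and that images of down-closed subsets of $R_\omega$ lie in $\A$, which is exactly the content of the paper's colimit construction.
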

 \begin{proof}  The case when $(R,\leq),\rho$ is finite is straightforward. Assume there exists $\rho'$ such that $\rho\succeq_2^f \rho'$. The cardinality of $\set{(a,b)}{a \leq b}$ for $\rho$ is necessarily greater or equal to that of $\set{(a,b)}{a \leq' b}$ for $\rho'$; it is equal if and only if $\rho$ and $\rho'$ are isomorphic. So if we consider a sequence of non-isomorphic $\rho \succeq_2 \rho_1 \succeq_2 \dots$, this sequence is necessarily finite, of size at most the cardinality of $\set{(a,b)}{a \leq b}$. This ensures the existence of an extremal realisation in the finite case. The infinite case is more subtle, and relies on colimits and Zorn's lemma. 
 	
The category of realisations with total maps has colimits of total-order diagrams.  A diagram $d$ from a total order $(I,\leq)$ to realisations, comprises a collection of  total maps of realisations  $d_{i,j}:d(i)\to d(j)$ when $i\leq j$ s.t.~$d_{i,i}$ is always the identity map and 
if $i\leq j$ and $j\leq k$ then $d_{i,k} = d_{j,k}\circ d_{i,j}$.  We suppose each realisation $d(i)$ has carrier $(R_i, \leq_i)$ with $d(i):R_i\to A$.  
We construct the colimit realisation of the diagram as follows. 

The elements of the colimit realisation consist of
equivalence classes of elements of the disjoint union
$R\eqdef \biguplus_{i\in I} R_i$
under the equivalence
\[
(i,e_i) \sim (j, e_j) \iff \exists k\in I.\ i\leq k \ \&\ j\leq k\ \&\ d_{i,k}(e_i) = d_{j,k}(e_j) \,
\]
---we shall write $\setof{e_i}_\sim$ for the equivalence class of $(i,e_i)$.
Consequently we may define a function $\rho_R:R\to A$ by taking $\rho_R(\setof{e_i}_\sim)=  \rho_i(e_i)$.  
Because every $d_{i,j}$ is a surjective function,  every equivalence class in $R$ has a representative in $R_i$ for  every  $i\in I$.  Moreover, for any $e\in R$ there is $k\in I$ s.t.
\[\set{e'\in R}{e'\leq_R e} =  \set{\setof{e'_k}_\sim}{e'_k\leq_k e_k}\,,\]
where $e=\setof{e_k}_\sim$, so is finite---by the argument in the first paragraph of this proof.  It follows that $\rho_R$ is a realisation.  
The maps
 $f_i:\rho_i \succeq_2 \rho_R$, where $i\in I$, given by
$f_i(e_i) = \setof{e_i}_\sim$ form a colimiting cocone.  

Suppose $\rho$ is a realisation.  Consider all total-order diagrams $d$ from a total order $(I,\leq)$ to realisations starting from $\rho$ with $d_{i,j}$ not an isomorphism if $i<j$.  Within them one such diagram, $d'$ from $(I',\leq')$, extends another, $d$ from $(I,\leq)$, if the total order $(I,\leq)$ is an initial segment of the total order $(I',\leq')$ with $d$ a restriction of $d'$.
Amongst them, by  Zorn's lemma,  there is a maximal diagram w.r.t.~extension.  From the maximality of the diagram its colimit is necessarily  
extremal.  In more detail, by Zorn's lemma, construct a colimiting cocone $f_i:d(i) \succeq_2 \rho_R$, $i\in I$ ---using the same notation as above.  By maximality of the diagram some $f_k$ must be an isomorphism; otherwise we could extend the diagram by adding a top element to the total order and sending it to $\rho_R$. If $j$ should satisfy $k<j$ then  $f_j\circ d_{k,j} = f_k$ so $ f_k^{-1}\circ f_j\circ d_{k,j} =\id_{R_k}$. It would follow that $d_{k,j}$ is injective, as well as surjective, it being a total map of realisations, and  consequently that $d_{k,j}$ is an isomorphism---a contradiction.  Hence $k$ is the maximum element in $(I,\leq)$.  If the colimit were not extremal we could again adjoin a new top element above $k$ thus extending the diagram---a contradiction.   
 \end{proof}

For example, as a corollary, a
countable configuration of a family of configurations always has an injective extremal realisation.
By serialising the countable configuration,
$a_1\leq a_2\leq \cdots \leq a_n\leq \cdots$,
where $\setof{a_1, \cdots, a_n}\in \A$ for all $n$,
we obtain an injective realisation $\rho$.  By Lemma~\ref{lem:existextr} we can coarsen $\rho$ to an extremal realisation $\rho'$  with $\rho\succeq_2^f \rho'$.  As $\rho = \rho'f$ the surjective function $f$  is also injective, so a bijection, ensuring that the extremal realisation $\rho'$ is  injective.

The following rather technical lemma and corollary are crucial.

\begin{lem}\label{lem:forext}  Assume  $(R,\leq), \rho$,  $(R_0,\leq_0), \rho_0$ and $(R_1,\leq_1), \rho_1$ are realisations.  
\begin{enumerate}
	\item[(i)] Suppose $f=\rho \succeq_1^{f_1} \rho_0 \succeq_2^{f_2}\rho_1$.  Then there are maps so that $f=\rho \succeq_2^{g_2} \rho' \succeq_1^{g_1}\rho_1$:  
	\[
	\xymatrix{
		\rho \ar[d]_{f_1}\ar@{..>}[r]^{g_2}&\rho'\ar@{..>}[d]^{g_1} \\
		\rho_0 \ar[r]^{f_2}&\rho_1
	}
	\]
	\item[(ii)] Suppose $\rho \succeq_1^{f_1} \rho_0$ where $R_0$ is not a down-closed subset of $R$.  Then there are maps so that $f_1= \rho \succeq_2^{g_2} \rho' \succeq_1^{g_1}\rho_0$ with $g_2$ not an isomorphism:
	\[
	\xymatrix{
		\rho \ar[d]_{f_1}\ar@{..>}[r]^{g_2}&\rho'\ar@{..>}[dl]^{g_1} \\
		\rho_0 &
	}
	\] 
\end{enumerate}
\end{lem}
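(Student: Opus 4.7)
The plan is to construct $\rho'$ (with its carrier $(R',\leq')$) together with the maps $g_1,g_2$ explicitly, then verify the axioms of maps of realisations. For (i), the requirements that $g_1$ be a reverse inclusion $R_1\hookrightarrow R'$ and that $g_1\circ g_2 = f_2\circ f_1$ essentially force the carrier to be $R' := R_1\sqcup(R\setminus R_0)$, with $g_2 := f_2$ on $R_0$, $g_2 := \id$ on $R\setminus R_0$, $g_1$ the partial identity on $R_1$ (undefined elsewhere), and $\rho'$ equal to $\rho_1$ on $R_1$ and $\rho$ on $R\setminus R_0$. Commutativity $g_1\circ g_2 = f$ is then immediate by case analysis.

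The substantive work is defining the order $\leq'$ on $R'$ so that $g_2$ is a map of realisations. I define $\leq'$ as the reflexive-transitive closure of four generator families: (a) $\leq_1$ within $R_1$; (b) $\leq$ restricted to $R\setminus R_0$; (c) $f_2(e_0)\leq' e$ whenever $e_0\in R_0$, $e\in R\setminus R_0$, and $e_0\leq e$ in $R$; and (d) $e\leq' f_2(e_0)$ whenever $e\in R\setminus R_0$, $e_0\in R_0$, and $e\leq e_0$ in $R$. The central verification is that $g_2$ preserves down-closed subsets: by induction on derivations of $b\leq' a$, the crucial case is a step in (a), $p\leq_1 f_2(e)$ with $e$ in a down-closed $y\subseteq R$, resolved by invoking that $f_2$, being a total map of realisations, sends $\{e'\in R_0\mid e'\leq e\}$ onto the $\leq_1$-ideal of $f_2(e)$, yielding $e'\in R_0\cap y$ with $f_2(e')=p$. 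Antisymmetry of $\leq'$ is the subtle point: lifting a strict $\leq'$-cycle through the generators produces an alternating sequence of $\leq$-relations in $R$ that, under iteration, forces a violation of either the partial-order structure on $R$ or its finite-past axiom. The remaining checks---that $\rho'$ is a realisation and that $g_1$ is a projection---then follow: a down-closed $x\subseteq R'$ pulls back under $g_2$ to a down-closed subset of $R$, and $g_1$ requires only the inclusion $\leq_1\subseteq {\leq'}|_{R_1}$ provided by (a), with equality not needed (so that any additional $\leq'$-relations within $R_1$ induced by zigzags through $R\setminus R_0$ are harmless).

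For (ii), the failure of down-closedness provides $e_*\in R\setminus R_0$ and $e_0\in R_0$ with $e_*\leq e_0$. Applying (i) with the trivial $f_2 = \id_{R_0}$ returns $g_2 = \id$, so I then exploit this witnessing pair to obtain a strictly coarser intermediate $\rho'$---for instance by identifying $e_*$ with another element that the construction has rendered equivalent, producing a non-injective (and therefore non-iso) $g_2$ compatible with $\rho$ and with the projection onto $R_0$. The main obstacle throughout is the verification that $g_2$ preserves down-closed subsets in (i), which depends critically on $f_2$'s status as a total map of realisations; antisymmetry of $\leq'$ is a secondary subtlety resting on the finite-past axiom for realisations.
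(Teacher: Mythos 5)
Your carrier $R' = R_1\sqcup(R\setminus R_0)$, the functions $g_2$, $g_1$, $\rho'$ and the overall shape of the argument all match the paper's construction; the genuine gap is in your order $\leq'$, specifically generator (d). The paper deliberately omits it: for $b\in R_1$ it sets $a\leq' b$ iff $a\in R_1$ and $a\leq_1 b$, so elements of $R_1$ have \emph{no} predecessors from $R\setminus R_0$ --- the construction forgets the dependencies of $R_0$-elements on elements outside $R_0$. With (d) included, the construction breaks. First, $\leq'$ need not be antisymmetric: total maps of realisations are surjective but need not be injective, so one can have $e_0\leq e\leq e_1$ with $e_0,e_1\in R_0$, $e\in R\setminus R_0$ and $f_2(e_0)=f_2(e_1)$; then (c) and (d) give $f_2(e_0)\leq' e\leq' f_2(e_0)$ with $e\neq f_2(e_0)$, and nothing in $R$ is violated, so your proposed cycle-lifting argument cannot close this case. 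Second, $g_2$ fails to preserve down-closed subsets, and the case your induction omits is precisely a (d)-step: take $R=\{e,e_0,b\}$ with $e\leq e_0$ the only strict relation, $R_0=\{e_0,b\}$, and $f_2$ collapsing $e_0$ and $b$ to a single $c$ (legitimate when $\rho(e_0)=\rho(b)$); then $\{b\}$ is down-closed in $R$ but $g_2\{b\}=\{c\}$ is not down-closed in $R'$, since $e\leq' c$ by (d) via $e\leq e_0$ while $e\notin g_2\{b\}$.

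Third, (d) sabotages part (ii): with $f_2=\id_{R_0}$ your generators (a)--(d) reconstitute the whole of $\leq$, so $\leq'=\leq$ and $g_2$ is an isomorphism, as you notice. Your patch --- making $g_2$ non-injective by ``identifying $e_*$ with another element'' --- does not work: $\rho=\rho' g_2$ forces any element merged with $e_*$ to have the same $\rho$-value, and no such element need exist (take $R$ with two events $c<e$, $R_0=\{e\}$ and $\rho$ injective). The intended $g_2$ in (ii) is not non-injective; it is a bijection on elements that strictly coarsens the order. Once (d) is dropped, the paper's single construction handles both parts: in (ii), $R'=R$ but $[e]'=[e]\cap R_0\subsetneq[e]$ for any $e\in R_0$ witnessing the failure of down-closure, so $g_2$ cannot be an isomorphism of realisations; and in (i), the only elements $\leq'$-below $f_2(b)$ are its $\leq_1$-predecessors, which lie in $f_2(\set{a\in R_0}{a\leq b})$ because $f_2$ preserves down-closed subsets, so $g_2$ does preserve down-closure. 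Your remaining checks (commutation, $g_1$ being a projection, $\rho'$ being a realisation) are fine once the order is corrected.
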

\begin{proof}
(i) Construct the realisation $(R',\leq'), \rho'$ as follows.  
Define
\[R' = (R\setdif R_0) \cup R_1\]
where w.l.o.g.~we assume the sets $R\setdif R_0$ and $R_1$ are disjoint.  Define  
$g_2:R\to R'$ to act as the identity on elements of $R\setdif R_0$ and as $f_2$ on elements of $R_0$.  

When $b\in R\setdif R_0$, define 
\[
a\leq' b
\  \text{ iff }\  
\exists a_0\in R.\ a_0 \leq b \ \&\ g_2(a_0) =a\,.
\] 
When $b\in  R_1$, define
\[
a\leq' b
 \ \text{ iff }\ 
 a\in R_1 \ \&\ a\leq_1 b
\,.
\]
To see $\leq'$ is a partial order observe that reflexivity and antisymmetry follow directly from the corresponding properties of $\leq$ and $\leq_1$.  Transitivity requires an argument by cases.  For example, in the most involved case, where  
\[
c\leq' a \hbox{  with } a\in R_1
 \hbox{ and }
a\leq' b \hbox{ with } b\in R\setdif R_0
\]
we obtain 
\[
c \leq_1 a \hbox{ and } a_0 \leq b
\] 
for some  
$a_0 \in R_0$ 
with  $f_2(a_0) = a$.
As $f_2$ is surjective and preserves down-closed subsets,
\[
c_0 \leq_0 a_0 \hbox{ and } a_0 \leq b
\] 
for some  
$c_0 \in R_0$ 
with $f_2(c_0) = c$.
Consequently, $c_0  \leq b$ with  $g_2(c_0) =c$, making $c\leq' b$, as required for transitivity.  

Define $\rho'$ to act as $\rho$ on elements of $R\setdif R_0$ and as $\rho_1$ on elements of $R_1$. Then $\rho = \rho' g_2$ directly.  We check $\rho'$ preserves down-closed subsets, so is a realisation.  Let $b\in R'$. We use, for instance, $[b]'$
to stand for the down-closure of $b$ according to $\leq'$.
If $b\in R_1$ then $\rho'[b]' = \rho_1 [b]_1\in \A$. If $b\in R\setdif R_0$ then $\rho'[b]' = \rho g_2 [b]$ is the image under $\rho$ of the down-closed subset $g_2 [b]$, so in $\A$.  
Because  $f_2$ preserves down-closed subsets  so does $g_2$.  
We already have  
$\rho = \rho' g_2$, making $g_2$ a map of realisations $\rho \succeq_2^{g_2} \rho'$. 
Define $g_1:R'\parrow R_1$ to be the reverse of the inclusion $R_1\subseteq R'$. Because  $\rho_1$ is the restriction of $\rho'$ to $R_1$,  
$g_1$ is a map of realisations $\rho' \succeq_1^{g_1}\rho_1$. By construction $f=g_1g_2$.

\noindent
(ii) This follows from the construction of $(R'\leq'),\rho'$ used in (i) but in the special case where $f_2$ is the identity map (with $R_0=R_1$).  Then $R'= R$ but $\leq'\neq \leq$ as there is $e\in R_0$ with $[e]_0 \subsetneq [e]$ ensuring that $[e]' = [e]_0 \neq [e]$.
 \end{proof}

\begin{cor}\label{cor:extsucceqext}
If $\rho$ is extremal and $\rho \succeq^f \rho'$, then $\rho'$ is extremal and there is $\rho_0$ s.t.~$f:\rho\succeq_1 \rho_0 \iso \rho'$.  Moreover, the carrier $R_0$ of $\rho_0$ is a down-closed subset of the carrier $R$ of $\rho$, with order the restriction of that on $R$.   
\end{cor}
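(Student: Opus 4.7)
The plan is to take the canonical factorisation $f = \rho \succeq_1^{f_1} \rho_0 \succeq_2^{f_2} \rho'$ (where $R_0$ is the domain of definition of $f$, carrying the restriction of $\leq$), and then show separately that $R_0$ is down-closed in $R$, that $f_2$ is an isomorphism, and that $\rho'$ is extremal. All three claims will reduce to pairing extremality of $\rho$ with one of the two parts of Lemma \ref{lem:forext}.

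For down-closure of $R_0$, I would invoke the contrapositive of Lemma \ref{lem:forext}(ii) applied to the projection $\rho \succeq_1^{f_1} \rho_0$: if $R_0$ were not down-closed in $R$, the lemma would hand us a non-isomorphism total map out of $\rho$, contradicting extremality. For the iso $f_2: \rho_0 \iso \rho'$, I would feed the canonical factorisation into Lemma \ref{lem:forext}(i), obtaining $f = \rho \succeq_2^{g_2} \rho'' \succeq_1^{g_1} \rho'$ with $g_2$ forced to be an iso by extremality of $\rho$. Unpacking the construction inside the proof of Lemma \ref{lem:forext}(i), the carrier $R''$ is the disjoint union $(R \setdif R_0) \cup R'$, with $g_2$ acting as the identity on $R \setdif R_0$ and as $f_2$ on $R_0$, and with $R'$ down-closed in $R''$ (by the defining clause for $\leq'$ on elements of $R_1$). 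Hence $f_2 = g_2|_{R_0}$ is a bijection $R_0 \to R'$, and transporting $R' \subseteq R''$ back along $g_2^{-1}$ shows that $f_2^{-1}$ preserves down-closed subsets, so $f_2 : \rho_0 \iso \rho'$.

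Extremality of $\rho'$ reduces, via the iso $f_2$, to extremality of $\rho_0$. Given any total map $h:\rho_0 \succeq_2 \sigma$, I would apply Lemma \ref{lem:forext}(i) again to the projection-then-total composite $\rho \succeq_1^{f_1} \rho_0 \succeq_2^h \sigma$, extracting a total map $\rho \succeq_2^{\tilde g_2} \tilde\sigma$; extremality of $\rho$ forces $\tilde g_2$ to be an iso, and the same tracing of the explicit construction identifies $h$ with the restriction $\tilde g_2|_{R_0}$, which is therefore itself an iso.

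The main obstacle is the bookkeeping inside the construction of Lemma \ref{lem:forext}(i): one has to verify that the order placed on $R''$ really does make $R'$ a down-closed subset with the restricted order, and that the restriction of an iso of realisations to a matched pair of down-closed subsets is again an iso. A secondary subtle point is aligning the two meanings of ``$R_0$'' in play, namely the domain of definition of $f$ used in the canonical factorisation and the set called $R_0$ in the hypothesis of Lemma \ref{lem:forext}(i); once this correspondence is nailed down, the whole argument is a short chaining of Lemma \ref{lem:forext}(i)--(ii) with the definition of extremal.
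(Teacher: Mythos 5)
Your proposal is correct and follows essentially the same route as the paper: factor $f$ canonically as $\rho \succeq_1^{f_1} \rho_0 \succeq_2^{f_2} \rho'$, use Lemma~\ref{lem:forext}(ii) with extremality of $\rho$ to get down-closure of $R_0$, and use Lemma~\ref{lem:forext}(i) with extremality of $\rho$ to force $f_2$ to be an isomorphism and $\rho_0$ (hence $\rho'$) to be extremal. The paper states this in two sentences; your version just makes explicit the bookkeeping inside the construction of Lemma~\ref{lem:forext}(i) that the paper leaves implicit.
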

\begin{proof} Directly from Lemma~\ref{lem:forext}.  Assume $\rho$ is extremal and $\rho \succeq^f \rho'$.  We can factor  $f$  into $\rho \succeq_1^{f_1} \rho_0 \succeq_2^{f_2}\rho'$.  From (i),  if   $\rho_0$ were not extremal nor would $\rho$ be---a contradiction; hence   $f_2$ is an isomorphism.  From (ii), the carrier $R_0$ of $\rho_0$ has to be a down-closed subset of the carrier $R$ of $\rho$, as otherwise we would contradict the extremality of $\rho$. 
 \end{proof}

It follows that if $\rho$ is extremal and $\rho\succeq^f \rho'$ then $\rho'$ is extremal and the  inverse  relation $g\eqdef f^{-1}$ is an injective function preserving and reflecting down-closed subsets, \ie~$g[r'] = [g(r')]$ for all $r'\in R'$.  In other words:

\begin{cor} \label{cor:extpreceqextrigid} If $\rho$ is extremal and $\rho \succeq^f \rho'$, then $\rho'$ is extremal and the inverse $g\eqdef f^{-1}$ is a rigid embedding  from the carrier of $\rho'$ to the carrier of $\rho$ such that $\rho' = \rho g$.\footnote{Rigid embeddings first appeared in work of Gilles Kahn and Gordon Plotkin; they are the embeddings appropriate to stable domain theory---see \eg~Section~1.6 of~\cite{evstrs}.}
\end{cor}

\begin{lem}\label{lem:extperfect}
Let $(R,\leq), \rho$ be an extremal realisation.  Then 
\begin{enumerate}
\item[(i)] if $r'\leq r$ and $\rho(r) =\rho(r')$ then $r=r'$; 
\item[(ii)] if $[r) = [r')$ and  $\rho(r) =\rho(r')$ then $r=r'$. Here $[r)\eqdef [r]\setdif\setof r$.
\end{enumerate}
\end{lem}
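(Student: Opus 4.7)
The plan is to prove both parts by contradiction using extremality: in each case, assuming $r \neq r'$, I would construct a realisation $\rho''$ together with a total, non-injective map $\rho \succeq_2^f \rho''$, contradicting extremality of $\rho$.

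For (i), suppose $r' < r$ with $\rho(r) = \rho(r')$. Take $R'' = R \setdif \setof{r}$ with the restricted order, and $\rho'' \eqdef \rho \restriction R''$. Define $f:R \to R''$ by $f(r) = r'$ and $f(s) = s$ otherwise; then $\rho = \rho'' \circ f$ because $\rho(r) = \rho(r')$. To see $\rho''$ is a realisation, note that any down-closed $x'' \subseteq R''$ either is already down-closed in $R$, or else enlarging it to $y = x'' \cup \setof r$ is down-closed in $R$ and satisfies $\rho(y) = \rho(x'')$ (as $r' \in x''$ because $r' < r \leq $ some element of $x''$ forces $r' \in x''$). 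The map $f$ preserves down-closed subsets because $r \in x$ forces $r' \in x$ by $r' \leq r$. Finally, $f$ is not injective, so not an isomorphism, contradicting extremality.

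For (ii), suppose $r \neq r'$ with $[r) = [r')$ and $\rho(r) = \rho(r')$. First note that $r, r'$ are incomparable: if $r' \leq r$, we get $r' \in [r) = [r')$, which is absurd. Again take $R'' = R \setdif \setof r$, but now with an enriched order $\leq''$ that identifies $r$ with $r'$: namely, for $a, b \in R''$, set $a \leq'' b$ iff $\tilde a \leq \tilde b$ in $R$ for some $\tilde a \in \setof a \cup (\setof r$ if $a = r')$ and some $\tilde b \in \setof b \cup (\setof r$ if $b = r')$. Using $[r) = [r')$ and the incomparability of $r, r'$, I would verify that $\leq''$ is a partial order (transitivity reduces to $R$-transitivity via $r$, and antisymmetry uses incomparability). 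Set $\rho'' = \rho \restriction R''$ and $f(r) = r'$, $f(s) = s$ otherwise. To see $\rho''$ is a realisation, let $x''$ be down-closed in $R''$, and let $y = x'' \cup \setof r$ if $r' \in x''$, else $y = x''$; using the enrichment of $\leq''$, $y$ is down-closed in $R$, and $\rho(y) = \rho(x'')$ because $\rho(r) = \rho(r')$. A parallel case analysis, again exploiting the enrichment of $\leq''$ and $[r) = [r')$, shows that $f$ preserves down-closed subsets. Since $f(r) = r' = f(r')$, the map $f$ is not injective, contradicting extremality of $\rho$.

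The main obstacle is the construction of the quotient order $\leq''$ in part (ii): naively restricting $\leq$ to $R''$ does not suffice, because elements strictly above $r$ in $R$ must end up above $r'$ in $R''$ after identifying $r$ with $r'$. The key hypothesis $[r) = [r')$ is exactly what makes the enriched relation antisymmetric and transitive, and what ensures the quotient map $f$ preserves down-closed subsets; without it, the construction would either fail to be a partial order or fail to give a map of realisations.
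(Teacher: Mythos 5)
Your proof is correct and rests on the same core idea as the paper's: identifying $r$ with $r'$ produces a total, surjective, non-injective map of realisations out of $\rho$, contradicting extremality. The only difference in execution is that the paper first projects to the down-closed subset $[r]$ (for (i)) or $[\setof{r,r'}]$ (for (ii)) --- legitimate by Corollary~\ref{cor:extpreceqextrigid}, since such projections of extremals are again extremal --- which makes $r$ and $r'$ maximal and so dispenses with the hand-built quotient order $\leq''$ that your global construction needs in part (ii).
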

\begin{proof} (i) Suppose  $r'\leq r$ and $\rho(r) =\rho(r')$.  By Corollary~\ref{cor:extpreceqextrigid}, we may project to $[r]$ to obtain an extremal realisation $\rho_0:[r]\to A$.
Suppose $r$ and $r'$ were unequal.  
We can define a realisation as the restriction of $\rho_0$ to $[r)$.  The function from $[r]$ to $[r)$ taking $r$ to $r'$ and otherwise acting as the identity function is a map of realisations from the realisation $\rho_0$ and clearly not an isomorphism, showing $\rho_0$ to be non-extremal---a contradiction.  Hence $r=r'$, as required.  

\noindent
(ii) Suppose $[r) = [r')$ and  $\rho(r) =\rho(r')$.  Projecting to $[\setof{r, r'}]$ we obtain an extremal realisation. If $r$ and $r'$ were unequal  there would be a non-isomorphism map to the realisation obtained by projecting to $[r]$, \viz~the map from $[\setof{r, r'}]$ to $[r]$ sending $r'$ to $r$ and fixing all other elements.  
 \end{proof}

In fact, by modifying condition (i) in the lemma above a little we can obtain a characterisation of extremal realisations---though not strictly necessary for the rest of  of the paper:

\begin{lem}\label{lem:charnofextremals}
Let $(R,\leq), \rho$ be a realisation.  Then $\rho$ is extremal iff
\begin{enumerate}
\item[(i)] if $X\subseteq [r)$, with $X$ down-closed and $r\in R$, and $\rho(X\cup\setof r) \in \A$ then $X=[r)$; and 
\item[(ii)] if $[r) = [r')$ and  $\rho(r) =\rho(r')$ then $r=r'$.
\end{enumerate}
\end{lem}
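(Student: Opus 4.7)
The plan is to prove the two implications separately, in each case by contrapositive where it shortens the argument. Statement (ii) is just Lemma~\ref{lem:extperfect}(ii), so only (i) is genuinely new. For the forward direction, assume $\rho$ is extremal and suppose $X \subsetneq [r)$ is $\leq$-down-closed with $\rho(X \cup \{r\}) \in \A$; I would reduce to Lemma~\ref{lem:forext}(ii). Take the carrier $R_0 = X \cup \{r\}$ equipped with the restricted order and restricted $\rho_0$: the two conditions on $X$ are exactly what is needed so that every $\leq$-down-closed subset of $R_0$ maps into $\A$ (case split on whether $r$ lies in such a subset), and the reverse-inclusion partial map furnishes $\rho \succeq_1 \rho_0$. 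Because $R_0$ fails to be $\leq$-down-closed in $R$ (precisely because $X \subsetneq [r)$), Lemma~\ref{lem:forext}(ii) refactors this as $\rho \succeq_2^{g_2} \rho'' \succeq_1 \rho_0$ with $g_2$ not an isomorphism, contradicting extremality.

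For the backward direction I also argue by contrapositive: assume (i) and (ii), suppose $\rho \succeq_2^f \rho'$ with $f$ not an iso, and derive a failure of (i) or (ii). In Case~A, $f$ is not injective; choose $(e_1,e_2)$ with $f(e_1)=f(e_2)=e'$, $e_1\neq e_2$, minimising $|[e_1]\cup[e_2]|$. If $e_1,e_2$ are comparable, WLOG $e_1<e_2$, take $r=e_2$ and $X=[e_1)$: then $X\subsetneq[e_2)$ (since $e_1\in[e_2)\setminus[e_1)$) and $\rho(X\cup\{e_2\})=\rho([e_1])\in\A$ using $\rho(e_1)=\rho(e_2)$, violating (i). If $e_1,e_2$ are incomparable and $[e_1)=[e_2)$, the pair itself violates (ii); otherwise $X=[e_1)\cap[e_2)$ is $\leq$-down-closed and strictly inside $[e_1)$, and the key is to show $\rho(X\cup\{e_1\})\in\A$. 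Minimality forces every $f$-coincidence $f(w_1)=f(w_2)$ with $w_1\in[e_1]$, $w_2\in[e_2]$, $w_1\neq w_2$ to be $(e_1,e_2)$ itself, which (combined with incomparability) yields $f([e_1])\cap f([e_2])=f(X)\cup\{e'\}$; since each $f([e_i])$ is $\leq'$-down-closed in $R'$, so is this intersection, and then $\rho(X\cup\{e_1\})=\rho'(f(X)\cup\{e'\})\in\A$ by $\rho'$ being a realisation, violating (i).

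In Case~B, $f$ is bijective but $f^{-1}$ fails to preserve $\leq$-down-closed subsets. Transporting $\leq'$ along $f$ gives an order $\leq''$ on $R$ with $\leq''\subsetneq\leq$, and $\rho$ is also a realisation for $\leq''$. Pick $b^*$ that is $\leq''$-minimal in $B:=\{b\in R:[b)_{\leq''}\neq[b)_\leq\}$ (non-empty because $\leq''\subsetneq\leq$). The $\leq''$-minimality forces $[y)_{\leq''}=[y)_\leq$ for every $y<''b^*$, from which $[b^*)_{\leq''}$ is itself $\leq$-down-closed. Taking $X=[b^*)_{\leq''}$ yields $X\subsetneq[b^*)_\leq$ together with $\rho(X\cup\{b^*\})=\rho([b^*]_{\leq''})\in\A$ (as $\rho$ is a realisation for $\leq''$), violating (i).

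The most delicate step is the incomparable subcase of Case~A: verifying that $f(X)\cup\{e'\}$ is $\leq'$-down-closed uses the minimality hypothesis twice—once to show $[e']_{\leq'}\subseteq f(X)\cup\{e'\}$ (each strict $\leq'$-predecessor of $e'$ has preimages in both $[e_1)$ and $[e_2)$ which minimality forces to coincide, landing in $X$), and once to identify $f([e_1])\cap f([e_2])$ with $f(X)\cup\{e'\}$. The remaining steps are bookkeeping with the order, the realisation axioms, and the family axiom.
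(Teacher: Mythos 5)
Your proof is correct in substance and follows essentially the same strategy as the paper's, with two presentational differences. For the forward direction the paper builds the coarser realisation by hand---projecting to $[r]$ and then shrinking the order below $r$ so that only $X\cup\setof r$ sits below $r$---whereas you reach the same contradiction by packaging the restriction to $X\cup\setof r$ as a $\succeq_1$-map and invoking Lemma~\ref{lem:forext}(ii); since that lemma is itself proved by exactly this order-modification, the two routes coincide, and your check that the restriction is a realisation is the right one. For the backward direction the paper proves directly that any total map $f:\rho\succeq_2\rho'$ is injective and order-preserving, using (i) \emph{positively} (concluding $X=[r_1)$ and $X=[r_2)$ and then invoking (ii)); you run the contrapositive with an explicit case split. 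Your Case~B (transporting $\leq'$ to $\leq''$ and picking a $\leq''$-minimal bad element) is a clean, slightly more explicit rendering of the paper's order-preservation step, and your minimality choice ($|[e_1]\cup[e_2]|$ minimal) differs from the paper's componentwise condition but does deliver the claim ``every coincidence inside $[e_1]\times[e_2]$ is $(e_1,e_2)$'' in the incomparable subcase, which is the only place you use it.

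One step needs a one-line patch. In the incomparable subcase you assert that $X=[e_1)\cap[e_2)$ is \emph{strictly} contained in $[e_1)$, but $[e_1)\neq[e_2)$ only guarantees that $X$ is proper in at least one of $[e_1)$, $[e_2)$: if $[e_1)\subsetneq[e_2)$ then $X=[e_1)$ and applying (i) at $r=e_1$ yields no violation. The fix is immediate by symmetry, since $\rho(X\cup\setof{e_1})=\rho(X\cup\setof{e_2})$ as subsets of $A$ (because $\rho(e_1)=\rho(e_2)$), so (i) can be applied at whichever of $e_1,e_2$ has $X$ proper in its strict past---or, as the paper does, conclude $X=[e_1)$ \emph{and} $X=[e_2)$ from (i) and then feed $[e_1)=[e_2)$ into (ii). With that repair the argument is complete.
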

\begin{proof}  
\noindent
{\it ``Only if''}:  Assume $\rho$ is extremal.  We have already established (ii) in
 Lemma~\ref{lem:extperfect}.  To show (i), suppose  $X$ is down-closed  and $X\subseteq [r)$ in $R$ with $\rho(X\cup\setof r) \in \A$.  By Corollary~\ref{cor:extpreceqextrigid}, we may project to $[r]$ to obtain an extremal realisation $\rho_0:[r]\to A$.
Modify the restricted order $[r]$ to one in which $r'\leq r$ iff $r'\in X\cup \setof r$, and is otherwise unchanged.  The same underlying function $\rho_0$ remains a realisation, call it $\rho_0'$, on the modified order.  The identity function gives us a map $f:\rho_0 \succeq_2 \rho_0'$ which is an isomorphism between realisations iff $X= [r)$.

\noindent
{\it ``If''}:
Assume (i) and (ii).  Suppose  $f: \rho\succeq_2 \rho'$, where $R', \rho'$ is a realisation.  We show $f$ is injective and order-preserving.  As $f$ is presumed to be surjective and to preserve down-closed subsets we can then conclude it is an isomorphism.  

To see $f$ is injective suppose the contrary that $f(r_1) = f(r_2)$ for $r_1\neq r_2$.  W.l.o.g.~we may suppose $r_1$ and $r_2$ are minimal in the sense that
\[
r_1'\neq r_2'\ \&\ r_1'\leq r_1 \ \&\  r_2' \leq r_2  \ \&\   f(r_1') = f(r_2') \implies r_1'= r_1 \ \&\ r_2'= r_2\,.
\]
Define $r'\eqdef  f(r_1) = f(r_2)$. Then
\[
[r'] \subseteq f[r_1] \ \&\ [r'] \subseteq f[r_2]\,.
\]
Furthermore, 
as only $r'$ can be the image of $r_1$ and $r_2$ under the function $f$, 
 \[
[r') \subseteq f[r_1) \ \&\ [r') \subseteq f[r_2)\,.
\]
It follows that 
\[
 [r')\subseteq f[r_1) \cap  f[r_2) = f ([r_1)\cap [r_2))
\]
where the equality is a consequence of the minimality of $r_1, r_2$.  
Taking $X\eqdef [r_1)\cap [r_2)$ we have $(fX)\cup\setof{r'}$ is down-closed in $R'$.
Therefore
\[
\rho(X\cup \setof{r_1}) = \rho'f (X\cup \setof{r_1}) = \rho' (fX \cup\setof{r'}) \in \A\,.
\]
By condition (i), $X=[r_1)$.  Similarly, $X=[r_2)$, so $[r_1) = [r_2)$.  Obviously $\rho(r_1) = \rho'f (r_1) =  \rho'f (r_2)  = \rho(r_2)$, so we obtain $r_1= r_2$ by (ii) ---a contradiction, so $f$ is injective.

We now check that $f$ preserves the order.  Let $r\in R$. Define
\[
X\eqdef [\set{r_1}{r_1\leq r\ \&\ f(r_1)< f(r)}]\,,
\]
where the square brackets signify down-closure in $R$.  
Then $X$ is down-closed in $R$ by definition and $X\subseteq [r)$.  We have
$[f(r)] \subseteq f[r]$
whence
\[
f X = f [r] \cap [f(r)) =  [f(r))\,.
\]
Therefore
$f X \cup\setof{f(r)}$ is down-closed in $R'$, so
\[
\rho(X\cup\setof r) = \rho'f(X\cup\setof r) =  \rho'(fX\cup\setof{ f(r)}) \in \A\,.
\]
Hence $X= [r)$, by (i).  It follows that
\[
r_1 < r 
\implies r_1\in X
\implies f(r_1) < f(r) \hbox{ in } R'\,.
\]
This  shows that $f$ preserves the order on $R$.
 \end{proof}
 
\begin{lem}
There is at most one map between extremal realisations.
\end{lem}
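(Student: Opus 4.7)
My plan is to leverage Corollary~\ref{cor:extpreceqextrigid}, which tells us that any map $f\colon \rho \succeq \rho'$ from an extremal $\rho$ is determined by its inverse $g\eqdef f^{-1}$, a total injective function $g\colon R'\to R$ which is a rigid embedding (so $g[r'] = [g(r')]$ for all $r'\in R'$) and satisfies $\rho' = \rho g$. Consequently, $f$ is uniquely recovered from $g$, so to prove uniqueness of the map it suffices to show uniqueness of $g$.

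Suppose $f_1,f_2\colon \rho\succeq\rho'$ are two maps, with corresponding rigid embeddings $g_1,g_2\colon R'\to R$. I would show $g_1(r') = g_2(r')$ for every $r'\in R'$ by well-founded induction on $[r')$, which is finite by the definition of realisation. Assume inductively that $g_1$ and $g_2$ agree on $[r')$. Using injectivity and the rigid-embedding identity, for each $i$ we have
\[
[g_i(r')) \;=\; [g_i(r')] \setdif \setof{g_i(r')} \;=\; g_i[r'] \setdif \setof{g_i(r')} \;=\; g_i[r')\,,
\]
so the inductive hypothesis gives $[g_1(r')) = g_1[r') = g_2[r') = [g_2(r'))$. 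Moreover $\rho(g_1(r')) = \rho'(r') = \rho(g_2(r'))$. Applying Lemma~\ref{lem:extperfect}(ii) to the extremal realisation $\rho$ yields $g_1(r') = g_2(r')$, completing the induction. Hence $g_1 = g_2$ as functions, and therefore $f_1 = g_1^{-1} = g_2^{-1} = f_2$ as partial functions (their domains of definition coincide, being the common image of $g_1 = g_2$).

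The argument is largely routine once the machinery is in place; the only step requiring care is verifying that $g_i[r') = [g_i(r'))$, which is where both injectivity of $g_i$ and the rigid-embedding property $g_i[r'] = [g_i(r')]$ from Corollary~\ref{cor:extpreceqextrigid} are used. Lemma~\ref{lem:extperfect}(ii) then does the real work, translating the two coincidences (equal strict down-sets and equal $\rho$-images) into equality of the elements themselves.
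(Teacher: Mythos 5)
Your proof is correct and follows essentially the same route as the paper's: the paper also passes to the inverse functions $g,g'$, picks a $\leq$-minimal point of disagreement (your well-founded induction in contrapositive form), derives $[g(r'))=[g'(r'))$ and $\rho(g(r'))=\rho(g'(r'))$, and concludes by Lemma~\ref{lem:extperfect}(ii). Your explicit verification that $g_i[r')=[g_i(r'))$ via the rigid-embedding property is a step the paper leaves implicit.
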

\begin{proof}
Let $(R,\leq), \rho$  and $(R',\leq'), \rho'$ be extremal realisations.  Let $f, f':\rho \to \rho'$ be  maps with converse relations $g$ and $g'$ respectively.  We show the two functions $g$ and $g'$ are equal, and hence so are their converses $f$ and $f'$.  Suppose otherwise that $g\neq g'$. Then there is an $\leq$-minimal $r'\in R'$ for which 
$g(r')\neq g'(r')$ and $g[r') = g'[r')$.  Hence $[g(r')) = [g'(r'))$ and $\rho(g(r')) =\rho'(r') = \rho(g'(r'))$. As $\rho$ is extremal, by Lemma~\ref{lem:extperfect}(ii) we obtain $g(r') = g'(r')$---a contradiction.
\end{proof}

Hence, by the lemma above, extremal realisations of $A$ under $\preceq$ form a preorder.  We define the {\em order of extremal realisations} to have elements isomorphism classes of extremal realisations ordered according to the existence of a map between representatives of isomorphism classes. Alternatively, and equivalently, we could take a choice of representative from each isomorphism class and order these according to whether there is a map from one to the other.  
Recall  a prime extremal realisation is an extremal realisation with a  top element, \ie~when its carrier contains an element which dominates all other elements in the carrier.  The following proposition 
is a direct corollary of Proposition~\ref{prop:domianofrealzns} in the next section.

\begin{prop}  The order of extremal realisations of a 
family of configurations 
$\A$ forms a prime-algebraic domain~\cite{NPW} with complete primes
the prime extremal realisations.
\end{prop}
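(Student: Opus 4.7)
The plan is to derive this as a direct corollary of the forthcoming Proposition~\ref{prop:domianofrealzns} (which will characterise the order of all realisations) combined with Corollaries~\ref{cor:extsucceqext} and~\ref{cor:extpreceqextrigid} already established. The crucial simplification is that, by Corollary~\ref{cor:extpreceqextrigid}, once we restrict to extremal realisations the relation $\preceq$ is reverse rigid inclusion, so the quotient by isomorphism of carriers is a partial order whose elements are representable by down-closed sub-carriers of a common extremal realisation.

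First, I would verify directed completeness. Given a directed family $D$ of extremal realisations, I form the colimit along the rigid embeddings supplied by Corollary~\ref{cor:extpreceqextrigid}, exactly as in the colimit construction used in the proof of Lemma~\ref{lem:existextr}; this yields a realisation whose carrier is the union of the carriers with its order induced. By Corollary~\ref{cor:extsucceqext}, any non-trivial total map from this colimit to a coarsening would restrict, through the inclusion of some member of $D$, to a non-trivial total map out of an already extremal realisation, a contradiction; hence the colimit is itself extremal and serves as $\bigsqcup^\uparrow D$. The same technique yields bounded completeness: any compatible pair shares a common extremal upper bound (obtained by coarsening via Lemma~\ref{lem:existextr} a realisation containing both rigidly) and then a least upper bound through the colimit.

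Second, I would identify the complete primes as precisely the prime extremal realisations. If $\rho$ is prime extremal with top element $r$ and $\rho \preceq \bigsqcup X$ for a compatible set $X$, then the rigid-embedding image of $r$ in the colimit is contributed by some $\rho'' \in X$; since the down-closure $[r]$ is finite and rigidly embedded, it too lies entirely in the image of $\rho''$, whence $\rho \preceq \rho''$. Conversely, any extremal realisation $(R, \leq), \rho$ decomposes as
\[ \rho \;=\; \bigsqcup \{\, \rho|_{[r]} \mid r \in R \,\}, \]
where each $\rho|_{[r]}$ is prime extremal (extremal by Corollary~\ref{cor:extsucceqext} and prime because $r$ tops its carrier). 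This exhibits the required sub-basis of complete primes and thus gives prime-algebraicity.

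The main obstacle is the stability of extremality under colimits of rigid embeddings: one must exclude that gluing extremal pieces together introduces some new non-trivial coarsening. This is exactly what Corollary~\ref{cor:extsucceqext} delivers, since any such coarsening would pull back along one of the embeddings into $D$ to a coarsening of an already extremal realisation, a contradiction. Once this point is secured, identifying the complete primes and verifying algebraicity are formal.
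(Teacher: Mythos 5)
The paper's own proof is a one-liner: by Proposition~\ref{prop:domianofrealzns} the order of extremal realisations is isomorphic to the inclusion order on the configurations of the prime event structure underlying $\er(\A)$, and such configuration orders are prime algebraic domains whose complete primes are the configurations $[p]$ (the classical result of~\cite{NPW} recalled in Section~\ref{sec:PES}); under the isomorphism these are exactly the prime extremal realisations. You instead redo the domain theory by hand, which is a legitimate alternative route, and much of it checks out: the union of a directed family of rigidly embedded extremal carriers is again extremal (your localisation argument via Corollary~\ref{cor:extsucceqext} works, once you add that the at-most-one-map lemma makes the diagram of embeddings coherent), and the argument that a prime extremal below $\bigsqcup X$ factors through some member of $X$ because its finite, rigidly embedded carrier $[r]$ lands in the down-closed image of that member is fine.

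There is, however, a genuine gap exactly where you need \emph{least} upper bounds --- both for bounded completeness and for the decomposition $\rho = \bigsqcup\{\rho|_{[r]} \mid r\in R\}$. Given compatible extremals $\rho_1,\rho_2$ with common upper bound $\sigma$, you take their union inside $\sigma$; but to show this is least you must map it into an \emph{arbitrary} upper bound $\tau$, i.e.\ glue the two rigid embeddings $R_1\to T$ and $R_2\to T$ into a single rigid embedding of $R_1\cup R_2$. The glued function can fail to be injective precisely if some $a\in R_1\setminus R_2$ and $b\in R_2\setminus R_1$ share an image in $T$, which happens exactly when an extremal realisation contains two distinct elements with isomorphic prime restrictions. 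Ruling this out (it follows by induction on $\lvert[r]\rvert$ from Lemma~\ref{lem:extperfect}(ii)) is the real content hidden in Proposition~\ref{prop:domianofrealzns}'s claim that $\theta$ and $\phi$ are mutually inverse: an extremal realisation is faithfully reconstructed from its set of prime extremal restrictions. You cite that proposition in your opening sentence --- misdescribing it as characterising the order of \emph{all} realisations --- but never invoke it at the two points where it is actually needed. Without it, neither the existence of least upper bounds of compatible sets nor prime-algebraicity is established; if you prefer to keep the hands-on route, you should prove the ``no two elements with isomorphic prime restrictions'' fact explicitly and then use it to justify both gluing steps.
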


The proofs of the following observations are straightforward consequences of the definitions.  They
  emphasise that 
prime extremal realisations  are  a  
generalisation of complete primes.

\begin{prop}\label{prop:extremalsofprevstrsareconfigigs}  
Let $(A, \leq_A, \Con_A)$ be a prime event structure. 
The function which takes  an extremal realisation $(R,\leq_R), \rho$ to the configuration $\rho R$ is an order isomorphism from the order of extremal realisations of $\iconf A$ to the order of configurations $(\iconf A, \subseteq)$;  prime extremal realisations   correspond to complete primes of $\iconf A$. 
For an individual extremal realisation $(R,\leq_R), \rho$, the function $r\mapsto \rho(r)$ is an order isomorphism between $(R,\leq_R)$ and  $(\rho R,\, \leq_A\!\!\rstd \rho R)$, the order of the event structure restricted to the configuration $\rho R$.
\end{prop}  
 
A configuration $x\in\F$, of a family of configurations $\F$, is {\em irreducible} iff there is a necessarily unique $e\in x$ such that $\forall y\in \F,\ e\in y \subseteq x$ implies $y=x$. Irreducibles coincide with complete (join) irreducibles w.r.t.~the order of inclusion. It is tempting to think of irreducibles as representing minimal complete enablings.  But, as sets, irreducibles both (1) lack sufficient structure:  in the  formulation we are led to, of minimal complete enablings as prime extremal realisations, several prime realisations can have the same irreducible as their underlying set; and (2) are not general enough: there are prime realisations whose underlying set is not an irreducible.  
We conclude with examples illustrating the nature of extremal realisations; it is convenient to describe families of configurations by general event structures. 
\begin{exa}\label{Ex1}{\rm
This example shows that prime extremal realisations  do not correspond to irreducible configurations. First, we show a general event structure $E_0$ (all subsets consistent) with irreducible configuration $\setof{a,b,c,d}$ and  two (injective) prime extremals $E_1$ and $E_2$ with tops $d_1$ and $d_2$ which both have the same irreducible configuration $\setof{a,b,c,d}$ as their image. The lettering indicates the functions associated with the realisations, \eg~events $d_1$ and $d_2$ in the partial orders map to $d$ in the general event structure.
\begin{center}
\begin{tabular}{ccccccccc}
$E_0$ && $E_1$ && $E_2$ && $F_0$ && $F_1$ \\
$\xymatrix@R=1em@C=8pt{
{} & \ve{d} & {}\\
{}&{}&{}\\
{}& \ve{c} \ar@{|>}[uu]|{\text{\tiny AND}} &{}\\
{}& {\text{\tiny OR}}&{}\\
\ve{a} \ar@{|>}[uur] \ar@{|>}@/^1pc/[uuuur] & {} & \ve{b} \ar@{|>}[uul] \ar@{|>}@/_1pc/[uuuul] \\ 
       }$&$\quad$&
$\xymatrix@R=1em@C=8pt{
\ve{d_1}&&  {} \\
{}&&{}\\
\ve{c_1} \ar@{|>}[uu]&&  {}\\
{}&&{}\\
\ve{a} \ar@{|>}[uu] && \ve{b} \ar@{|>}[uuuull]\\
        }$&$\quad$&
$\xymatrix@R=1em@C=8pt{
{}&& \ve{d_2} \\
{}&&{}\\
{}&& \ve{c_2} \ar@{|>}[uu]\\
{}&&{}\\
 \ve{a} \ar@{|>}[uuuurr] && \ve{b} \ar@{|>}[uu]\\
        }$&$\quad$&
$\xymatrix@R=1em@C=8pt{
{} & \ve{d} & {}\\
{}&{}&{}\\
{}& \ve{c} \ar@{|>}[uu]|{\text{\tiny AND}} &{}\\
{}&{\text{\tiny OR}}&{}\\
\ve{a} \ar@{|>}[uur]
& {} & \ve{b} \ar@{|>}[uul] \ar@{|>}@/_1pc/[uuuul] \\ 
        }$&$\quad$&
$\xymatrix@R=1em@C=8pt{
\ve{d_1}&&  {} \\
{}&&{}\\
\ve{c_1} \ar@{|>}[uu]&&  {}\\
{}&&{}\\
\ve{a} \ar@{|>}[uu] && \ve{b} \ar@{|>}[uuuull]\\
				}$\\
\end{tabular}
\end{center}
 On the other hand there are prime extremal realisations   of which the image is   not   an irreducible configuration.  Consider the general event structure $F_0$. The prime extremal  $F_1$ describes a situation where $d$ is enabled by $b$ and $c$, and $c$ is enabled by $a$. It has image the configuration $\setof{a,b,c,d}$ which is not irreducible, being the union of the two incomparable configurations $\setof a$ and $\setof{b,c,d}$. 
} \end{exa}

\begin{exa}{\rm  \label{ex:nonedc}
It is possible to have extremal realisations in which an event depends on an event of the family having been enabled in two distinct ways, as in the following prime extremal realisation, on the left; it is clearly not injective.  
\[\xymatrix@R=1em@C=2em{
{} & \ve{f} & {} &
 &&&& {} & \ve{f} & {} \\
{} & {} & {} &
&&&& {} & AND & {} \\
\ve{d} \ar@{|>}[uur] & {} & \ve{e} \ar@{|>}[uul] &
&&&&\ve{d} \ar@{|>}[uur] & {} & \ve{e} \ar@{|>}[uul] \\
{} & {} & {} &
&&&& {} & {} & {} \\
\ve{c_1} \ar@{|>}[uu]  & {} & \ve{c_2} \ar@{|>}[uu] &
&&&& {} & \ve{c} \ar@{|>}[uur] \ar@{|>}[uul] & {} \\
{} & {} & {} &
&&&& {} & OR & {} \\
\ve{a} \ar@{|>}[uu] & {} & \ve{b} \ar@{|>}[uu] &
&&&&\ve{a} \ar@{|>}[uur] & {} & \ve{b} \ar@{|>}[uul]
}\]
The extremal describes the event $f$ being enabled by $d$ and $e$ where they are in turn enabled by different ways of enabling $c$.  We assume all subsets consistent.
} \end{exa}
\section{The causal unfolding: an adjunction from \texorpdfstring{$\ESE$}{E-equiv} to \texorpdfstring{$\FAME$}{Fam-equiv}
}
Furnished with the concept of extremal realisation, 
we can now exhibit an adjunction (precisely, a very simple case of biadjunction or pseudo adjunction) from $\ESE$, the category of \ese's, to 
 $\FAME$, the category of equivalence families. 
The left adjoint $I:\ESE\to \FAME$ is the full and faithful functor which takes an ese to its family of configurations with the original equivalence.  

The right adjoint, the {\em causal unfolding}, $\er:\FAME \to \ESE$ is defined on objects as follows.  
Let $\A$ be an equivalence family with underlying set $A$.  
Define $\er(\A) = (P,\Con_P, \leq_P, \equiv_P)$ where
\begin{itemize}
\item
$P$ consists of a choice from within each isomorphism class of the prime extremals $p$ of $\A$ ---we write $\max(p)$ for the image of the top element in $A$;
\item
Causal dependency $\leq_P$ is $\preceq$ on $P$;
\item
$X\in \Con_P$ iff $X\fsubseteq P$ and $\max\, [X]_P \in \A$ ---the set $[X]_P$ is the $\leq_P$-downwards closure of $X$, so equal to $\set{p'\in P}{\exists p\in X.\ p'\preceq p}$, and  $\max\, [X]_P$ is its image under $\max$;
\item
$p_1\equiv_P p_2$ iff $p_1, p_2\in P$ and $\max(p_1) \equiv_A \max(p_2)$.
\end{itemize}

\begin{prop} \label{prop:domianofrealzns}
The configurations of $P$ defined above, ordered by inclusion, are order-isomorphic to the order of extremal realisations: an extremal realisation $\rho$ corresponds, up to isomorphism, to the configuration $\set{p\in P}{p\preceq \rho}$ of $P$;  conversely, a configuration $x$ of $P$ corresponds to an extremal realisation
$\max:x  \to A$ with carrier $(x,\preceq)$, the restriction of the order of $P$ to $x$.
\end{prop}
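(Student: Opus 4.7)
The plan is to establish an order-isomorphism between configurations of $P$ under inclusion and isomorphism classes of extremal realisations of $\A$ under $\preceq$, via the maps $\Phi(\rho) \eqdef \set{p\in P}{p\preceq \rho}$ and $\Psi(x) \eqdef (\max:(x,\preceq)\to A)$. The whole argument pivots on the following key claim: for an extremal realisation $\rho$ with carrier $(R,\leq)$, the function $\eta_\rho:R\to P$ sending $r$ to the $P$-representative of the isomorphism class of the projection $\rho|_{[r]}$ is an order-isomorphism from $(R,\leq)$ onto $\Phi(\rho)$, and satisfies $\max\circ\eta_\rho = \rho$.

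To prove this key claim, first observe that applying Corollary~\ref{cor:extsucceqext} to the $\succeq_1$-projection $\rho\succeq_1 \rho|_{[r]}$ shows each $\rho|_{[r]}$ is extremal; as $r$ is its top element it is prime, and the same projection witnesses $\rho|_{[r]}\preceq \rho$, so $\eta_\rho(r)\in \Phi(\rho)$. For surjectivity onto $\Phi(\rho)$, given $p\preceq \rho$ Corollary~\ref{cor:extpreceqextrigid} turns the witnessing map into a rigid embedding of $p$'s carrier into $R$; as $p$ has a top, its image is $[r]$ for some $r\in R$ and $p\iso \rho|_{[r]}$. Injectivity uses the lemma that there is at most one map between extremal realisations: from any iso $\sigma:\rho|_{[r]}\iso \rho|_{[r']}$ one forms a second rigid embedding $[r]\xrightarrow{\sigma}[r']\hookrightarrow R$ preserving $\rho$, and reversing it gives a second map $\rho\succeq \rho|_{[r]}$ which by uniqueness must coincide with the canonical one (the reverse of the inclusion $[r]\hookrightarrow R$); hence $\sigma$ is the identity on $[r]$, forcing $r=r'$. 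Order-reflection is analogous: a map $\rho|_{[r]}\preceq \rho|_{[r']}$ provides a rigid embedding $[r]\to [r']$ whose post-composition with $[r']\hookrightarrow R$ must, by the same uniqueness, equal the inclusion of $[r]$, so $r\in [r']$. Order-preservation is immediate from $r\leq r' \implies [r]\subseteq [r'] \implies \rho|_{[r]}\preceq \rho|_{[r']}$.

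Granted the key claim, $\Phi(\rho)$ is a configuration: it is $\preceq$-down-closed by transitivity, and for finite $X\subseteq \Phi(\rho)$ the set $\eta_\rho^{-1}[X]_P = [\eta_\rho^{-1}(X)]_R$ is a finite down-closed subset of $R$, so $\max[X]_P = \rho(\eta_\rho^{-1}[X]_P)\in \A$. Conversely $\Psi(x)$ is a realisation: $[p]_P$ is finite for each $p\in x$ by the axioms of prime event structures, and for any $\preceq$-down-closed $y\subseteq x$, $\max(y)$ is the directed union of the $\max(y_0)\in\A$ with $y_0$ ranging over finite down-closed subsets of $y$, hence lies in $\A$ by closure under unions of finitely compatible subsets. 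Its extremality is verified via Lemma~\ref{lem:charnofextremals}: condition~(ii) reduces, by applying the key claim to each of $p$ and $p'$, to an isomorphism of prime extremals $p\iso p'$ in $P$ and hence to $p=p'$ since $P$ picks one representative per isomorphism class; condition~(i) is transported along the iso $[p)_P \iso [t_p)_{R_p}$ supplied by the key claim applied to $p$ itself, reducing it to the corresponding condition satisfied by the extremal $p$. The identities $\Phi\Psi(x)=x$ and $\Psi\Phi(\rho)\iso \rho$ then read off directly from the key claim, as does order-preservation in both directions ($\rho\preceq\rho'\Rightarrow \Phi(\rho)\subseteq \Phi(\rho')$ by transitivity; $x\subseteq x'\Rightarrow \Psi(x)\preceq \Psi(x')$ via the reverse-of-inclusion as a $\succeq_1$-map). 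The main obstacle is the injectivity/order-reflection half of the key claim, which genuinely needs the uniqueness-of-maps lemma to be threaded through the factorisation of maps of realisations; this is where the full technical force of Section~4 is deployed.
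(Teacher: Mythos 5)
Your proposal is correct and follows essentially the same route as the paper: the paper's proof sets up exactly the same pair of mutually inverse order-preserving maps ($\theta = \Psi$ and $\phi = \Phi$) and leaves the verifications as "easily checked". Your "key claim" about $\eta_\rho$ being an order-isomorphism onto $\Phi(\rho)$, together with the use of Corollaries~\ref{cor:extsucceqext} and~\ref{cor:extpreceqextrigid}, the uniqueness-of-maps lemma, and Lemma~\ref{lem:charnofextremals} for extremality of $\Psi(x)$, simply supplies the details the paper elides (the paper instead dispatches extremality of $\theta(x)$ with a one-line contradiction), and all of these checks go through.
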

\begin{proof} It will be helpful to recall, from Corollary~\ref{cor:extpreceqextrigid}, that if $\rho \succeq^f \rho'$ between extremal realisations, then the inverse relation $f^{-1}$ is a rigid embedding of (the carrier of) $\rho'$ in (the carrier of) $\rho$; so $\rho'\preceq \rho$ stands for a rigid embedding.
Suppose $x\in\iconf P$.  
Then $x$ determines an extremal realisation
\[
\theta(x)\eqdef \max:(x, \preceq) \to A\,.
\]
The function $\theta(x)$ is a realisation because each $p$ in $x$ is, and extremal because, if not, one of the $p$ in $x$ would fail to be extremal, a contradiction. Clearly $\rho'\preceq \rho$ implies $\theta(\rho')\subseteq \theta(\rho)$. 
Conversely, it is easily checked that any extremal realisation $\rho:(R,\leq) \to A$ defines  a configuration $\set{p\in P}{p\preceq \rho}$.  If $x\subseteq y$ in $\iconf P$ then $\phi(x)\preceq \phi(y)$.  It can be checked that $\theta$ and $\phi$ are mutual inverses, \ie~$\phi\theta(x) = x$ and $\theta\phi(\rho) \iso \rho$ for all configurations $x$ of $P$ and extremal realisations $\rho$.
 \end{proof}

From the above proposition we see that the events of $\er(\A)$ correspond to the order-theoretic completely-prime extremal realisations~\cite{NPW}. This justifies our  use of the term `prime extremal' 
for  extremal with top element.

The component of the counit of the adjunction
  $\eps_A:I(\er(\A)) \to \A$  is given by the function 
\[
\eps_A(p) =   \max(p)\,.
\]
It is a routine check to see that $\eps_A$ preserves $\eeq$ and that any configuration $x$ of $P$ images under $\max$ to a configuration in $\A$,   moreover in a way that reflects $\equiv$.

\begin{thm}
Let $\A\in \FAME$.  For all $f: I(Q)\to \A$ in $\FAME$, there is a map $h:Q\to \er(\A)$ in $\ESE$ such that 
$f = \eps_A\circ I(h)$,
\ie~so the diagram
\[
\xymatrix{
\A & \ar[l]_{\eps_A} I(\er(\A))\\
& \ar[ul]^f I(Q) \ar@{..>}[u]_{I(h)}
}\]
commutes.  Moreover, if $h':Q\to \er(\A)$ is a map in $\ESE$ s.t.~$f \equiv \eps_A\circ I(h')$, \ie~the diagram above commutes up to $\equiv$,  then $h'\equiv h$.
\end{thm}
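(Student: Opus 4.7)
The strategy is to assign to each $q\in Q$ with $f(q)$ defined a prime extremal realisation of $\A$ whose top maps to $f(q)$, obtained by coarsening the realisation carried by $q$'s causal history in $Q$. Concretely, set $R_q\eqdef[q]_Q\cap\mathrm{dom}(f)$ with the restriction of $\leq_Q$, and $\rho_q\eqdef f|_{R_q}\colon R_q\to A$. Any $\leq_Q$-down-closed $D\subseteq R_q$ coincides with $[D]_Q\cap\mathrm{dom}(f)$ for the configuration $[D]_Q\subseteq[q]_Q$ of $Q$, so $fD = f[D]_Q\in \A$; thus $\rho_q$ is a realisation with top $q$. By Lemma~\ref{lem:existextr} coarsen $\rho_q$ via a total surjection $g_q$ to an extremal $\hat{\rho}_q$, and by Corollary~\ref{cor:extsucceqext} the restriction $\pi_q\eqdef\hat{\rho}_q|_{[g_q(q)]}$ to the principal down-closure of $g_q(q)$ is a prime extremal realisation with top mapping to $f(q)$. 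Define $h(q)$ to be the unique representative in $P = \er(\A)$ of $\pi_q$'s isomorphism class; leave $h(q)$ undefined when $f(q)$ is.

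To verify $h$ is a map of \ese{}'s I check the conditions of Proposition~\ref{prop:charnmap}. Preservation of $\equiv$ follows from $\max h(q) = f(q)$, $f$'s preservation of $\equiv$, and the definition of $\equiv_P$ on $\er(\A)$. For consistency, given $X\in\Con_Q$ pick $x\in\iconf Q$ with $X\subseteq x$; then $\max[hX]_P$ is a finitely compatible union of configurations of $\A$ all contained in $fx\in\A$, hence itself in $\A$, and local $\equiv$-injectivity is inherited from $f$'s. The causal-dependency clause~(ii) demands care: the family $\{g_q\}$ must be chosen \emph{coherently} so that $g_{q'} = g_q|_{R_{q'}}$ whenever $q'\leq_Q q$. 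This is achievable because $R_{q'}$ is $\leq_Q$-down-closed in $R_q$, so by Corollary~\ref{cor:extsucceqext} the map $g_q|_{R_{q'}}$ itself coarsens $\rho_{q'}$ to the extremal $\hat{\rho}_q|_{g_q(R_{q'})}$; Zorn's lemma then yields a coherent family along $\leq_Q$. With coherence, any $p'\preceq_P h(q)$ has by Proposition~\ref{prop:domianofrealzns} the form $[g_q(q')]$ for some $q'\in R_q$, which equals $h(q')$ with $q'\leq_Q q$ as required.

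Factorisation is direct: $(\eps_A\circ I(h))(q) = \max h(q) = f(q)$ and the undefinedness patterns agree, giving $f = \eps_A\circ I(h)$. For uniqueness up to $\equiv$, let $h'\colon Q\to \er(\A)$ be a map in $\ESE$ with $f\equiv\eps_A\circ I(h')$. For each $q$ with $f(q)$ defined, $\max h'(q)\equiv_A f(q) = \max h(q)$, whence $h'(q)\equiv_P h(q)$ directly from the definition of $\equiv_P$ on $\er(\A)$; the undefinedness patterns of $h$ and $h'$ both mirror that of $f$. Hence $h'\equiv h$. The principal technical hurdle is the coherent choice of the family $\{g_q\}$: for clause~(ii) to hold strictly rather than merely up to $\equiv_P$, the isomorphism classes $[g_q(q')]$ and $[g_{q'}(q')]$ must \emph{coincide}, and this propagates along $\leq_Q$ only via Corollary~\ref{cor:extsucceqext}'s preservation of extremality under restriction to down-closed subsets.
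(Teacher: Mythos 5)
Your overall strategy is the paper's: produce $h(q)$ by coarsening a realisation extracted from $f$ on $[q]_Q$ to an extremal via Lemma~\ref{lem:existextr}, cut down to the prime extremal over the top, verify the map conditions through Proposition~\ref{prop:charnmap}, and read off uniqueness up to $\equiv$ from $\max h(q)=f(q)$. The construction of $\rho_q$, the consistency and local-injectivity checks, and the uniqueness argument are all sound.

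The gap sits exactly where you locate the difficulty: the existence of a globally coherent family $\{g_q\}$. What you actually establish is that, for a \emph{fixed} $q$, the restrictions $g_q|_{R_{q'}}$ are extremal coarsenings of the $\rho_{q'}$ (Corollary~\ref{cor:extsucceqext} does give this); that arranges coherence top-down within one principal ideal $[q]$, but says nothing about compatibility across different ideals. Extremal coarsenings are far from unique (in the doctors example, the realisation with $d$ above both $a$ and $b$ coarsens to either of two non-isomorphic extremals), so two incomparable $q_1,q_2$ above a common $q'$ may induce conflicting coarsenings of $\rho_{q'}$, and ``Zorn's lemma then yields a coherent family'' is an assertion, not a proof. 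To show a maximal coherent partial family is total you must (i) glue the already-chosen extremals $\hat\rho_{q'}$ for $q'\in[q)$ --- which have no common ambient carrier when $[q)$ has several maximal elements --- into a single extremal realisation, and (ii) show that coarsening this glued extremal with a new top adjoined over $f(q)$ leaves everything below the top untouched, so the earlier choices are not disturbed. Point (ii) is the paper's key step (``because $\rho_{[q)}$ is extremal, $\rho\succeq_1\rho_{[q)}$, so $\rho$ only extends the order with extra dependencies of $\top$''), and point (i) is why the paper works with the carrier $h^{(n)}[q)\subseteq P$ ordered by $\preceq$ (Proposition~\ref{prop:domianofrealzns}) rather than with the raw carriers $R_q$: configurations of $\er(\A)$ live in one common space and glue for free. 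Reorganising your construction as an induction on depth --- choosing $g_q$ only after all $g_{q'}$ with $q'<_Q q$ are fixed, and starting from the already-extremal realisation on $h[q)$ with a top adjoined --- closes the gap and recovers the paper's proof.
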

\begin{proof}

Let $Q =(Q, \Con_Q, \leq_Q, \equiv_Q)$ be an \ese~and $f:I(Q) \to \A$ a map in $\FAME$.  We shall define  a  map $h:Q\to \er(\A)$ s.t.~$f= \eps_A h$. (As here, in the proof we shall elide the composition symbol $\circ$, and $I$ on maps which it leaves unchanged.)

We define the map $h:Q\to \er(\A)$ by induction on the depth of $Q$. The depth of an event in an event structure is the length of a longest $\leq$-chain up to it---so an initial event has depth 1. We take the depth of an event structure to be the maximum depth of its events.
(Because of our reliance on Lemma~\ref{lem:existextr}, we use the axiom of choice implicitly.)

Assume inductively that $h^{(n)}$ defines a map from $Q^{(n)}$ to $\er(\A)$ where $Q^{(n)}$ is the restriction of $Q$ to depth below or equal to $n$ such that $f^{(n)}$ the restriction of $f$ to $Q^{(n)}$ satisfies $f^{(n)}= \eps_A h^{(n)}$.  (In particular, $Q^{(0)}$ is the empty \ese~and $h^{(0)}$ the empty function.)  Then, by Proposition~\ref{prop:domianofrealzns}, any configuration $x$ of $Q^{(n)}$ 
determines
an extremal realisation $\rho_x: h^{(n)} x \to A$ with carrier $(h^{(n)}  x, \preceq)$.  

Suppose $q\in Q$ has depth $n+1$.  If $f(q)$ is undefined take $h^{(n+1)}(q)$ to be undefined.  Otherwise, note there is an extremal realisation $\rho_{[q)}$ with carrier $(h[q), \preceq)$.   Extend $\rho_{[q)}$ to a realisation $\rho_{[q)}^\top$  with carrier that of  $\rho_{[q)}$ with a new top element $\top$ adjoined, and make 
$\rho_{[q)}^\top$ extend the function $\rho_{[q)}$ by taking $\top$ to $f(q)$.  By Lemma~\ref{lem:existextr}, there is 
an extremal realisation $\rho$ such that $\rho_{[q)}^\top\succeq_2 \rho$.  Because $\rho_{[q)}$ is extremal,
$\rho\succeq_1 \rho_{[q)}$,  so $\rho$ only extends the order of $\rho_{[q)}$ with extra dependencies of $\top$. (For notational simplicity we  identify the carrier of $\rho$ with the set $h[q)\cup\setof\top$.)
Project $\rho$ to the extremal with top  $\top$.  Define this to be the value of $h^{(n+1)}(q)$.  In this way, we extend $h^{(n)}$ to a partial function $h^{(n+1)}: Q^{(n+1)} \to \er(\A)$ such that  $f^{(n+1)}= \eps_A h^{(n+1)}$. To see that $h^{(n+1)}$ is a map we can use Proposition~\ref{prop:charnmap}.  By construction  $h^{(n+1)}$ satisfies  property (ii) of Proposition~\ref{prop:charnmap} and the other properties are inherited fairly directly from $f$ via the definition of $\er(\A)$.

Defining $h=\bigcup_{n\in\omega} h^{(n)}$ we obtain a map 
$h : Q  \to \er(\A)$ such that  $f = \eps_A h$.

Suppose $h':Q\to \er(\A)$ is a map
s.t.~$f \equiv \eps_A  h'$.  Then, for any $q\in Q$,
\[
\max(h'(q)) = \eps_A h'(q) \eeq_A f(q) = \eps_A h (q) = \max(h(q))\,,
\]
so $h'(q) \eeq_P h(q)$ in $\er(\A)$.  Thus $h'\eeq h$.  
 \end{proof}

The theorem does not quite exhibit a traditional adjunction, because the usual cofreeness condition specifying an adjunction is weakened to only having uniqueness up to $\equiv$.  However the condition it describes does specify an exceedingly simple case of a  pseudo adjunction (or biadjunction) between 2-categories---a set together with an equivalence relation (a {\em setoid}) is a very simple example of a category.  As a consequence, whereas the usual cofreeness condition allows us to extend the right adjoint to arrows, so obtaining a functor, in this case following that same line will only yield a pseudo functor $\er$ as right adjoint: thus extended, $\er$ will only preserve composition and identities up to $\equiv$.  

The map
$(P,\eeq) \to \er(\iconf P, \eeq)$
which takes 
$p\in P$
to
the realisation with carrier  
$([p], \leq)$, the restriction of the causal dependency of $P$, with
the inclusion function 
$[p] \hookrightarrow P$
is an isomorphism; recall from Proposition~\ref{prop:extremalsofprevstrsareconfigigs} that the  configurations of a prime event structure correspond to its extremal realisations.  Such maps furnish the components of the unit of the pseudo adjunction:
\[
\xymatrix{
\ESE \ar@/_/[rr]_{I}^{\top} &&  \FAME \ar@/_/[ll]_{\er} 
}
\]

\begin{exa}\label{Ex3}{\rm
On the right we show a general event structure (all subsets consistent) and on its left its causal unfolding to an \ese\ under $\er$\,; the unfolding's events are the prime extremals.\footnote{See~\cite{MarcRep} for further examples of the causal unfolding including an inductive characterisation in~5.2.2.}
\[ \xymatrix@R=1em@C=2em{
 \ve{d_1} \ar@3{-}[rr] & {} & \ve{d_2}
{} &  {} & {} & {} & {} &
{} & \ve{d} & 
                \\
{} & {} & {} &
        {} & {} & {} & {} &
                {} & {} & {} \\            
                \ve{c_1} \ar[uu]^{
                } \ar@3{-}[rr] & {} & \ve{c_2} \ar[uu]_{
                }
& {}
& {} & {} & {} &
 {} & \ve{c} \ar[uu]|{AND
} & {} 
                \\
{} & & {} &
        {} & {} & {} & {} &
                {} & {OR}  & {} \\
 \ve{a} \ar[uu] \ar[uuuurr] & {} & \ve{b} \ar[uu] \ar[uuuull]
& {} & {} & {} & {} &
 \ve{a} \ar[uur] \ar@/^1pc/[uuuur] & {} & \ve{b} \ar[uul] \ar@/_1pc/[uuuul]        
        }\]  
} \end{exa}

\section{
Unfolding general event structures
}\label{sec:adjesetoges}

Recall $\GES$ is the category of general event structures.
We obtain a pseudo adjunction from $\ESE$ to $\GES$ via
an adjunction from $\FAME$ to $\GES$.
The right adjoint  $\fame:\GES\to\FAME$ is most simply described.  Given $(E, \Con, \vdash)$ in $\GES$ it returns the  equivalence family $(\iconf E, =)$  in $\FAME$ comprising the configurations together with the identity equivalence between events that appear within some configuration; the partial functions between events that are maps in $\GES$ are automatically maps in $\FAME$---the action of $\fame$ on maps.

For the effect of the left adjoint $\col:\FAME\to \GES$ on objects, define the {\em collapse} 
\[\col(\A) \eqdef (E,\Con, \vdash)\]  
where
\begin{itemize}
\item
$E = A_\equiv$, the equivalence classes of events in $A \eqdef \bigcup \A$\,;
\item
$X\in \Con$ iff $X\fsubseteq y_\equiv\eqdef\set{\setof a_\equiv}{a\in y}$, for some $y\in \A$\,; and
\item
$X\vdash e$ iff $e\in E$, $X\in\Con$ and $e\in y_\equiv \subseteq X\cup\setof e$, for some $y\in \A$.
\end{itemize}
It  follows that $y_\equiv$ is a configuration of $\col(\A)$ whenever $y\in\A$.  From this it is easy to see that $\col(\A)$ is a replete general event structure.

Let $(\A, \equiv)\in\FAME$. Assume that $\A$ has underlying set $A$.   The unit of the adjunction is defined  to have typical component $\eta_A:(\A, \equiv) \to \fame(\col(\A, \equiv))$
given by 
$
 \eta_A(a) = \setof{a}_\equiv \,.
$
It is easy to check that $\eta_A$ is a map in $\FAME$.  

\begin{thm}\label{thm:adjFAMEtoGES}
Suppose that $B= (B, \Con_B, \vdash_B) \in \GES$ and that $g: (\A, \equiv) \to (\iconf B, =)$ is a map in $\FAME$.  Then, there is a unique map $k: \col(\A, \equiv) \to B$ in $\GES$ such that the diagram
\[
\xymatrix{
(\A,\equiv) \ar[r]^{\eta_A}\ar[dr]_g&\fame(\col(\A, \equiv)) \ar@{..>}[d]^{\fame(k)}\\
& (\iconf B, =)
}
\]
commutes.
\end{thm}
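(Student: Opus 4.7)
The plan is to exploit the fact that every event of $\col(\A, \equiv)$ has the form $\setof a_\equiv$ for some $a\in A$, so that the commuting requirement $\fame(k)\circ \eta_A = g$ forces the value of $k$ on $\setof a_\equiv$ to be $g(a)$ (when defined) and leaves $k$ undefined on $\setof a_\equiv$ otherwise.  This argument simultaneously supplies the \emph{uniqueness} part of the statement (there is no freedom in the definition of $k$) and the candidate partial function $k:A_\equiv \parrow B$.

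Next I would verify that the prescription $k(\setof a_\equiv) \eqdef g(a)$ is actually well defined on equivalence classes.  If $a\equiv a'$, then by hypothesis $g$ is a map in $\FAME$ and hence preserves $\equiv$; because the equivalence on $\fame(B)=(\iconf B,=)$ is equality, this degenerates to the requirement that $g(a)$ and $g(a')$ be either both undefined, or both defined and equal.  This gives well-definedness of $k$ as a partial function $\col(\A,\equiv)\parrow B$, and the diagram commutes by construction.

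It then remains to check that $k$ satisfies the three clauses in the definition of a map in $\GES$.  Take $X\in\Con_{\col(\A,\equiv)}$, so that $X\fsubseteq y_\equiv$ for some $y\in\A$.  Choose representatives $a\in y$ for each class in $X$ to obtain a finite $X'\fsubseteq y$ with $kX = gX'$.  Since $gy\in\iconf B$, the direct image $kX$ is a finite subset of a configuration of $B$ and is therefore in $\Con_B$, giving consistency preservation.  For local injectivity, suppose $e_1,e_2\in X$ with $k(e_1)=k(e_2)$ both defined; choose representatives $a_1,a_2\in y$ with $e_i=\setof{a_i}_\equiv$, so $g(a_1)=g(a_2)$ are defined and lie in $gy$; local injectivity of $g$ in $\FAME$ (with equivalence on $B$ being equality) then yields $a_1\equiv a_2$, whence $e_1=e_2$.

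The main obstacle, as I see it, is the preservation of enabling.  Suppose $X \vdash e$ in $\col(\A,\equiv)$ with $k(e)$ defined, so $e\in y_\equiv \subseteq X\cup\setof e$ for some $y\in\A$.  Applying $k$ yields $gy \subseteq kX \cup \setof{k(e)}$, with $k(e)\in gy\in\iconf B$.  The plan is to extract a securing chain $e_1,\dots,e_n = k(e)$ for $k(e)$ in the configuration $gy$, let $j$ be the least index with $e_j = k(e)$ (which exists since $e_n=k(e)$), and observe that $\setof{e_1,\dots,e_{j-1}} \vdash_B e_j = k(e)$ by the securing condition.  By minimality of $j$, none of $e_1,\dots,e_{j-1}$ equals $k(e)$, so $\setof{e_1,\dots,e_{j-1}}\subseteq gy\setdif\setof{k(e)} \subseteq kX$.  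Since $kX$ is consistent (from the previous paragraph) and contains $\setof{e_1,\dots,e_{j-1}}$, the monotonicity axiom $Y\supseteq X\vdash e \Rightarrow Y\vdash e$ for general event structures gives $kX\vdash_B k(e)$, as required.  This completes verification that $k$ is a map in $\GES$, and uniqueness was already noted in the first step.
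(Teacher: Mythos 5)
Your proposal is correct and follows essentially the same route as the paper, which defines $k(\setof a_\equiv)=g(a)$ and leaves the verification that this is a well-defined, unique map in $\GES$ as "easily checked." Your write-up simply supplies those routine checks (well-definedness from $\equiv$-preservation into the identity equivalence, consistency and local injectivity via a representative configuration $y$, and enabling via a securing chain in $gy$ plus monotonicity of $\vdash$), all of which are sound.
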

\begin{proof}
The map $k: \col(\A, \equiv) \to B$ is given as the  function 
$
k(e) = g(a)\hbox{ where } e=\setof a_\equiv \,.
$
It is easily checked to be a map in $\GES$ and moreover to be the unique map from  $\col(\A, \equiv)$ to
$B$ making the above diagram commute.
 \end{proof}

Theorem~\ref{thm:adjFAMEtoGES} determines an adjunction:
\[\xymatrix{
\FAME\ar@/_/[rr]_{\col}^{\top} &&   \GES  \ar@/_/[ll]_{\fame} 
}\]
The construction $\col$ automatically extends from objects to maps; maps in $\FAME$ preserve equivalence so collapse to functions preserving equivalence classes.  
The counit of the adjunction has components $\eps_E: \col((\iconf E, =)) \to E$ which send singleton equivalence classes $\setof e$ to $e$.  The counit is an isomorphism at precisely those general event structures $E$ which are replete, so cuts down to a reflection from the subcategory of replete general event structures into equivalence families.

Composing  
\[\xymatrix{
\ESE \ar@/_/[rr]_{I}^{\top} &&  \FAME \ar@/_/[ll]_{\er}\ar@/_/[rr]_{\col}^{\top} &&   \GES  \ar@/_/[ll]_{\fame} 
}\]
we obtain a pseudo adjunction
\[
\xymatrix{
\ESE  \ar@/_/[rr]_{ }^{\top} &&  \GES \ar@/_/[ll]_{ }\,.
}
\]
  Its right adjoint constructs the {\em causal unfolding} 
of a general event structure.  

The composite pseudo adjunction from $\ESE$ to $\GES$ cuts down to a reflection, in  the sense that the counit is a natural isomorphism, when we restrict to the subcategory of $\GES$ where all general event structures are replete.  Then the right adjoint provides a pseudo functor embedding replete general event structures (and so families of configurations) in \ese's.

This concludes the construction of causal unfoldings of  (very general) equivalence-families, and, in particular, general event structures.  

We can ask for those extra axioms an \ese\, should satisfy in order that it arises from a general event structure;
an axiomatisation is given in Appendix~\ref{secn:gesasese}.

\section{Pullbacks  of \ese's?} 

A major motivation  has been to develop probabilistic strategies with parallel causes.  In the composition of strategies the constructions of (pseudo) pullback and hiding play a crucial role.  In composing strategies, essentially by playing them off against each other---the role of (pseudo) pullbacks, it is important to hide the moves associated with this interaction. 
Do  event structures with equivalence, $\ESE$, support these constructions?

 \Ese's do support  hiding. Let $(P, \leq, \Con_P, \eeq)$ be an \ese.  Let $V\subseteq P$ be a $\eeq$-closed subset of `visible' events. Define the {\em projection} of $P$ on $V$, to be $P{\mathbin\downarrow} V\eqdef (V, \leq_V, \Con_V, \eeq_V)$, where $v \leq_V v' \hbox{ iff } v\leq v' \ \&\ v,v'\in V$ and $X\in\Con_V \hbox{ iff }  X\in\Con\ \&\ X\subseteq V$ and $v \eeq_V v' \hbox{ iff } v\eeq v' \ \&\ v,v'\in V$. 

Hiding is associated with a factorisation of partial maps. Let $f$ be a partial map from $(P, \leq_P, \Con_P, \eeq_P)$ to $(Q, \leq_Q, \Con_Q, \eeq_Q)$. Letting $V\eqdef \set{e\in E}{ f(e) \hbox{ is defined}}$, the map $f$  factors into the composition
\[
\xymatrix{
P\ar[r]^{f_0}& P{\mathbin\downarrow} V \ar[r]^{f_1}& Q }
\]
of $f_0$, a partial map of \ese's taking $p\in P$ to itself if $p\in V$ and undefined otherwise, and $f_1$, a total map of \ese's acting like $f$ on $V$. We call $f_1$ the {\em defined part} of the partial map $f$.  Because $\eeq$-equivalent maps share the same domain of definition, $\eeq$-equivalent maps will determine the same projection and $\eeq$-equivalent defined parts. The factorisation is characterised to within isomorphism by the following  universal characterisation:  for any factorisation $\xymatrix{P\ar[r]^{g_0}& P_1 \ar[r]^{g_1}& Q }$ where $g_0$ is partial and $g_1$ is total there is a  (necessarily total) unique map $h: P{\mathbin\downarrow} V\to P_1$  such that we obtain the commuting diagram
\[
\xymatrix@R=12pt@C=20pt{
P\ar[r]^{f_0}\ar[dr]_{g_0}& P{\mathbin\downarrow} V\ar@{-->}[d]^h \ar[r]^{f_1}& Q \\
 & P_1\,.\ar[ur]_{g_1}& }
\]
 
By analogy with early work on prime event structures  and their representation by stable families~\cite{icalp82,evstrs} we might hope to obtain pullbacks and pseudo pullbacks in $\ESE$ via the adjunction to $\FAME$.  The category
$\FAME$ has pullbacks and pseudo pullbacks which are easy to construct---see Section~\ref{sec:constrns}.  

But unfortunately (pseudo) pullbacks in $\FAME$ don't provide us with  (pseudo) pullbacks in $\ESE$ because the right adjoint is only a pseudo functor:  in general it will only carry pseudo pullbacks to bipullbacks. While $\ESE$ does have bipullbacks (in which commutations and uniqueness are only up to the equivalence $\eeq$ on maps) it doesn't always have pseudo pullbacks or pullbacks---Appendix~\ref{app:pbsofeses}.  Whereas pseudo pullbacks and pullbacks are characterised up to isomorphism, bipullbacks are only characterised up to a weaker equivalence---that induced  on objects by the equivalence on maps.\footnote{Objects $P$ and $Q$ are equivalent iff there are two maps $f:P\to Q$,  $g:Q\to P$ s.t.~$gf\eeq \id_P$ and $fg\eeq \id_Q$.}   

We explore subcategories of $\ESE$, in particular w.r.t.~whether they support (pseudo) pullbacks.

\subsection{Subcategories of  \texorpdfstring{$\ESE$}{E-equiv}}\label{sec:coreflnsinESE}
\newcommand{\rest}{r}

Consider the following successively weaker axioms on an \ese~$(P, \Con, \leq, \eeq)$:
\begin{itemize}
\item[]{\bf 
Ax0.\ }
$\setof{p_1, p_2}\in\Con \ \&\ p_1\eeq p_2 \implies p_1= p_2\,.$
\item[]{\bf 
Ax1.\ }
$p_1, p_2\leq 
p \ \&\ p_1\eeq p_2 \implies p_1= p_2\,.$
\item[]{\bf 
Ax2.\ }
$p_1\leq p_2  \ \&\ p_1\eeq p_2 \implies p_1= p_2\,.$
\end{itemize}
Ax0 says that any two prime causes of disjunctive event are mutually exclusive.  Ax2 we have met as a consequence of a realisation being extremal (Lemma~\ref{lem:extperfect}(i)) so it will always hold of any image under the construction $\er$.  Ax1 forbids any prime cause from depending on two distinct prime causes of a common disjunctive event.  Example~\ref{ex:nonedc} shows Ax1 does not hold of all extremal realisations and can fail in an image under the construction $\er$.  Ax1
 enforces a form of atomicity on disjunctive events:  whereas several prime causes of a disjunctive event may appear in a configuration,   another event is not permitted to depend on, so detect, those several prime causes together.    

Restricting to the full subcategories  of $\ESE$  satisfying these axioms we obtain 
$\ESE^0$, $\ESE^1$ and $\ESE^2$
respectively.   The factorisation of maps we met for $\ESE$ is inherited by all the subcategories as their respective axioms are preserved by the projection operation.  So all the subcategories support hiding.  

The full inclusion functors
\[
\ESE^0\hookrightarrow \ESE^1 \hookrightarrow \ESE^2\hookrightarrow \ESE\]
all have right adjoints so forming a chain of coreflections.  Essentially the right adjoints work by restricting 
the structures to that part satisfying the stronger axiom. 
The adjunctions are enriched in the sense that the associated natural isomorphisms
preserve and reflect the equivalence $\eeq$ between maps (see Appendix~\ref{app:equiv-enriched}).

For example, 
$\ESE^0$ is the full subcategory of $\ESE$ in which objects $(P, \Con, \leq, \eeq)$ satisfy the strongest axiom 
 Ax0.
Consequently its maps are traditional maps of event structures which preserve equivalence.
The inclusion functor $\ESE^0\hookrightarrow \ESE$ has a right adjoint $\rest:\ESE\to \ESE^0$ 
taking $Q=(Q, \Con_Q, \leq_Q, \eeq_Q)$ to $(Q',\Con',\leq',\eeq')$ where
\begin{itemize}
\item
$Q'$ consists of all $q\in Q$ s.t.~
$q_1\not\eeq_Q q_2$ for 
all $q_1, q_2\leq_Q q$;
\item
$X\in\Con'$ iff $X\subseteq Q'$ and $X\in\Con_Q$ and  $q_1\not\eeq_Q q_2$ for all $q_1,q_2\in X$; 
\item
$\leq'$ and $\eeq'$ are the restrictions of $\leq_Q$ and $\eeq_Q$ to $Q'$.  
\end{itemize}
The adjunction being enriched means that the  isomorphism
\[
\ESE^0(P, \rest(Q)) \iso \ESE^0(P, Q)\,,
\]
between homsets, beyond being natural in $P\in \ESE^0$ and $Q\in\ESE$, 
preserves and reflects the equivalence $\eeq$ between maps.  
 
As a consequence we obtain an adjunction from $\ESE^0$ to $\GES$.
The universality of counit is only up to $\eeq$.\footnote{It was falsely claimed in \cite{evstrs} that the `inclusion' of the category of prime event structures in that of general event structures had a right adjoint. The adjunction from $\ESE^0$ to $\GES$ corrects that originally incorrect idea; though the repair 
 is at the cost of uniqueness up to $\eeq$.}

The most important  subcategory for us is $\ESE^1$. Amongst the subcategories of $\ESE$ it is the smallest  extension of prime event structures which supports parallel causes and hiding.   Its objects are called {\em event structures with disjunctive causes (\edc's)}~\cite{CSL17}.  The right adjoint to the inclusion
\[
\ESE^1 \hookrightarrow \ESE
\]
on objects simply restricts them to those events which satisfy Ax1.
In general, within $\ESE$ we lose the local injectivity property that we're used to seeing for maps of event structures.  However for $\ESE^1$ we recover local injectivity w.r.t.~prime configurations $[p]$. 
If $f:P\to Q$ is a map in $\ESE^1$, then
\[
p_1,p_2 \in [p] \ \& \ f(p_1) = f(p_2) \implies p_1 = p_2\,.
\]

In the composite adjunctions from 
$
\ESE^1$ to $\FAME$, and from 
$
\ESE^1$ to $\GES$,  the right adjoint has
  the effect of restricting to those extremal realisations within which Ax1 holds; recall that the  prime extremal realisations of an equivalence family $\A$ correspond to the configurations of $\er(\A)$.  Because such prime extremals are necessarily injective functions their carriers can be taken to be configurations of the equivalence family or general event structure of which they are realisations.  
Appendix~\ref{secn:gesasedc} provides an axiomatisation of those \edc's which arise from general event structures.

 As we shall see, $\ESE^1$ also has pullbacks and pseudo pullbacks.  It is within $\ESE^1$ that we have developed probabilistic distributed strategies with parallel causes~\cite{CSL17,ecsym-notes}. 
The coreflection from $\ESE^0$ to $\ESE^1$ is helpful in thinking about constructions like pullback and pseudo pullback in $\ESE^1$ as its right adjoint will preserve such limits.  
In the category $\ESE^0$,  maps coincide with the traditional maps of labelled event structures, regarding events as labelled by their equivalence classes.  
Constructions such as pullback are already very familiar in $\ESE^0$.  What changes in the corresponding constructions in $\ESE^1$ is the manner of dealing with consistency.  

 While not strictly a subcategory of $\ESE$, we should also mention the relationship with the category $\mathcal{E}$ of traditional event structures.  
There is an obvious `inclusion' functor from the category of event structures $\mathcal{E}$ to the category $\ESE^0$;  it takes an event structure to the same event structure but with the identity equivalence
adjoined.   Regarding $\ESE^0$ as a category, so dropping the enrichment by equivalence relations, the `inclusion' functor
\[
 \mathcal{E} \hookrightarrow \ESE^0
\]
has a right adjoint, \viz~the forgetful functor which simply drops the equivalence $\eeq$ from the \ese. The adjunction is  a coreflection because the inclusion functor is full.
Of course it is no longer the case that the adjunction is enriched: the natural bijection of the adjunction cannot respect the equivalence on maps. 

The adjunction
\[
 \mathcal{E} \hookrightarrow \ESE^1
\]
is obtained as the composite of the adjunctions from 
 $\mathcal{E}$ to $\ESE^0$ and $\ESE^0$ to $\ESE^1$ and is necessarily not enriched.
 Despite this the adjunction from $\mathcal{E}$ to $\ESE^1$ has been useful in relating strategies based on \edc's to strategies based on event structures.  Its right adjoint, the functor forgetting  equivalence, preserves all limits and especially pullbacks important in composing strategies. While this does not entail that composition of strategies is preserved by the forgetful functor---because the forgetful functor does not commute with hiding, it gives us a strong relationship, between composition of strategies before and after applying the forgetful functor~\cite{ecsym-notes}.

\subsection{Edc's and stable \ef's---a coreflection}\label{secn:stableefs}

The previous section positioned the category edc's $\ESE^1$ with respect to other subcategories of ese's. 
We rechristen the category $\ESE^1$ to $\EDC$ in view of its importance in the theory of probabilistic distributed strategies with parallel causes. 
The closeness of \edc's  to event structures suggests a generalisation of stable families to  aid with constructions such as product, pullback  and pseudo pullback in $\EDC$.\footnote{It is hard to define the product and pullback of prime event structures directly.  However the category of prime event structures is in coreflection with the  category of stable families where product and pullback are easily defined; for example, the product of event structures is then obtained as the image under the right adjoint of the product of their stable families of configurations~\cite{icalp82,evstrs}.}  

We are fortunate in that  the complicated pseudo adjunction between \ese's and \ef's restricts to a much simpler adjunction, in fact a coreflection, between \edc's and {\em stable} \ef's now defined. 

In an equivalence family $(\A, \eeq_A)$ say a configuration $x\in \A$ is {\em unambiguous} iff $\forall a_1, a_2\in x.\ a_1\eeq_A a_2 \implies a_1 = a_2\,$. An equivalence family $(\A, \eeq_A)$, with underlying set of events $A$, is {\em stable} iff it satisfies
\begin{center}\begin{tabular}{l}
$\forall x,y,z\in\A.\ \ x, y\subseteq z \ \&\ z $ is unambiguous $\,\Rightarrow\, x\cap y\in \A$\,, \hbox{ and } \\
$\forall a\in A, x\in\A.\ \, a\in x \,\Rightarrow  \exists z\in\A,\, z$ is unambiguous $\&\, a\in z\subseteq x\,$.
\end{tabular}\end{center}
Given the other axioms of an ef,  we can deduce the seemingly stronger property
\[
\emptyset\neq X \subseteq \A, z\in \A.\ (\forall x\in X.\ x\subseteq z) \  \&\ z \hbox{ is unambiguous } \implies \bigcap X \in\A\,
\]
of a stable \ef~$\A$. 

In effect, a stable equivalence family contains a stable subfamily of unambiguous configurations out of which all other configurations are obtainable as unions.  Local to any unambiguous configuration $x$ there is a partial order on its events $\leq_x$:  each  $a\in x$ determines a {\em prime configuration}
\[
[a]_x \eqdef \bigcap\set{y\in \A}{a\in y\subseteq x}\,,
\]
the minimum set of events on which $a$ depends within $x$;  taking $a\leq_x b$ iff $[a]_x\subseteq [b]_x$ defines causal dependency between  $a, b\in x$. Write $\SFAME$ for the subcategory of stable \ef's.

 The configurations of an \edc~with its equivalence are easily seen to form a stable \ef~providing a full and faithful embedding of   $\EDC$ in $\SFAME$.  The embedding has a right adjoint $\Pr$. It is built out of prime extremals but we can take advantage of the fact that in a stable \ef~unambiguous prime extremals have the simple form of prime configurations. From a stable \ef $(\A, \eeq_A)$ we produce an \edc~$\Pr(\A, \eeq_A) =_{{\rm def}} (P, \Con, \leq, \eeq)$ in which  $P$ comprises the prime configurations with  
\begin{itemize}
\item
$[a]_x \eeq [a']_{x'} \hbox{ iff } a\eeq_A a'\,,$
\item 
$Z \in \Con  \hbox{ iff }  Z \subseteq P \ \&\ \bigcup Z \in\F\,, \hbox{ and}$
\item
$p\leq p' \hbox{ iff }  p, p'\in P\ \&\ p\subseteq p'$\,.
\end{itemize}
The adjunction is enriched in the sense that its natural bijection preserves and reflects the equivalence on maps: \\
\centerline{$ \xymatrix{\EDC  \ar@/_/[rr]_{ }^{\top} &&  \SFAME \ar@/_/[ll]_{\Pr}}$}

Compare the definition above with that of $\Pr$ on   stable families.  The   significant difference is in the way that consistency is defined; in the construction on a stable ef the consistency is inherited not from the stable family of unambiguous configurations but from the ambient ef $\A$ in which configurations may not be unambiguous.

A stable equivalence family $\A$ contains a stable subfamily $\unamb(\A)$ of unambiguous configurations out of which all other configurations are obtainable as unions. There is an obvious `inclusion' functor from the category of stable families $\SFAM$ to $\SFAME$; it takes a stable family $\A$, with underlying set $A$, to the stable ef $(\A, \id_A)$.  Its has $\unamb$ as a right adjoint:
\[
\xymatrix{
\SFAM  \ar@/_/[rr]_{ }^{\top} &&  \SFAME \ar@/_/[ll]_{\unamb}\,.
}
\]
 As the `inclusion' functor from $\SFAM$ to $\SFAME$ is full the adjunction is a coreflection.  The adjunction is not enriched in the sense that its natural bijection ignores the equivalence on maps present in $\SFAME$.  As right adjoints preserve limits, the stable family of unambiguous configurations of the product, or pullback, of stable ef's is the product, respectively pullback,  in stable families of the unambiguous configurations of the components.

We can now obtain a (pseudo) pullback in  \edc's by first forming the (pseudo) pullback of the stable \ef's obtained as their configurations and then taking its image under the right adjoint $\Pr$.  
 
\subsection{Constructions}\label{sec:constrns}

We make use of the constructions of product, pullback and pseudo pullback of ef's; just as with prime event structures we cannot expect such constructions to be easily achieved directly on \ese's. 
 The constructions of product and pullback of ef's will reduce to product and pullback on families of configurations when we take the equivalences $\eeq$ to be the identity relation.  
On stable families they reduce to the product and pullback of stable families~\cite{icalp82,evstrs}.

The {\em product} of ef's is given as follows.
Let $\A$ and $\B$ be ef's with underlying sets $A$ and $B$.  Their product will have underlying set $A\times_* B$, the product   of $A$ and $B$ in sets with partial functions with projections $\pi_1$ to $A$ and $\pi_2$ to $B$.  We take $c\eeq c'$ in $A\times_* B$ iff   $\pi_1 c \eeq \pi_1 c'$, or both are undefined,  and $\pi_2 c \eeq \pi_2 c'$, or both are undefined.  
Define the configurations of the product by:
$x\in \A \times \B$ iff
\begin{itemize}
\item
$ x\subseteq A\times_* B $  such that
\item
$
\pi_1 x \in \A  \ \& \  \pi_2 x \in \B\,,
$
\item
$
\forall c, c' \in x. \   \pi_1 (c) \eeq_A \pi_1 (c') \hbox{ or } \pi_2 (c) \eeq_B \pi_2 (c')  \implies  c\eeq c'\,$
and
 \item
$ 
\forall c\in x \exists c_1, \cdots, c_n\in x.\ c_n =c \ \&\ 
\\
 \hbox{\ }\,  \quad \forall i \leq n.\   \pi_1\setof{c_1,\cdots, c_i}\in \A \ \&\ \pi_2\setof{c_1,\cdots, c_i}\in\B\,.
$
\end{itemize}

We obtain the product in stable ef's by restricting to those configurations of the product of the stable ef's which are unions of unambiguous configurations.  Notice that unambiguous configurations of the product of stable ef's are exactly the configurations in the product in stable families of the subfamilies of unambiguous configurations.

Restriction w.r.t.~sets of events which are closed under $\eeq$ and synchronised compositions are defined analogously to  before.  In particular we obtain pullbacks and bipullbacks as restrictions of the product.  

Pullbacks exist in general but we concentrate on pullbacks of total maps.
Let $f: \A \to \C$ and $g:\B\to\C$ be total maps of ef's. Assume $\A$ and $\B$ have underlying sets $A$ and $B$.  Define  $D\eqdef\set{(a,b)\in A\times B}{f(a) = g(b)}$ with projections $\pi_1$ and $\pi_2$ to the left and right components.  On $D$, take $d\eeq_D d'$ iff $\pi_1(d)\eeq_A \pi_1(d')$ and  $\pi_2(d)\eeq_B \pi_2(d')$.  Define a family of configurations of the {\em pullback} to consist of 
$x\in \D$ iff
\begin{itemize}
\item
$ x\subseteq D$  such that
\item
$
\pi_1 x \in \A  \ \& \  \pi_2 x \in \B\,,
$
and
 \item
$
\forall d\in x \exists d_1, \cdots, d_n\in x.\ d_n =d \ \&\ 
\\
\hbox{\ }\,  \quad \forall i \leq n.\   \pi_1\setof{d_1,\cdots, d_i}\in \A \ \&\ \pi_2\setof{d_1,\cdots, d_i}\in\B\,.
$
\end{itemize}

The pullback in stable ef's is again obtained by restricting to those configurations which are unions of unambiguous configurations.  The unambiguous configurations in the pullback of stable ef's are obtained as the pullback in stable families of the subfamilies of unambiguous configurations.

Given that maps are related by an equivalence relation it is sensible to broaden our constructions to pseudo pullbacks---the universal characterisation of pseudo pullback follows the concrete construction. 

Pseudo pullbacks   of total maps
$f: \A \to \C$ and $g:\B\to\C$   of ef's are obtained in a similar way to pullbacks. Assume $\A$ and $\B$ have underlying sets $A$ and $B$.  Define  $D\eqdef\set{(a,b)\in A\times B}{f(a) \eeq_C g(b)}$ with projections $\pi_1$ and $\pi_2$ to the left and right components.  On $D$, take $d\eeq_D d'$ iff $\pi_1(d)\eeq_A \pi_1(d')$ and  $\pi_2(d)\eeq_B \pi_2(d')$.  Define a family of configurations of the {\em pseudo pullback} to consist of 
$x\in \D$ iff
\begin{itemize}
\item
$ x\subseteq D$  such that
\item
$
\pi_1 x \in \A  \ \& \  \pi_2 x \in \B\,,
$
and
 \item
$
\forall d\in x \exists d_1, \cdots, d_n\in x.\ d_n =d \ \&\ 
\\
\hbox{\ }\,  \quad \forall i \leq n.\   \pi_1\setof{d_1,\cdots, d_i}\in \A \ \&\ \pi_2\setof{d_1,\cdots, d_i}\in\B\,.
$
\end{itemize}
When $\A$ and $\B$ are stable ef's we obtain their pseudo pullback by restricting to those configurations obtained as the union of unambiguous configurations.   
 
Recall the universal property of a pseudo pullback of $f:\A\to \C$ and $g:\B\to \C$
(in this simple case).  A pseudo pullback
comprises two maps $\pi_1:\D\to \A$ and $\pi_2:\D\to \B$ such that
$f\pi_1 \eeq g\pi_2$ with the universal property that given any two maps $p_1:\D'\to \A$ and $p_2:\D'\to \B$ such that 
$f p_1 \eeq g p_2$ there is a unique map $h:\D'\to \D$   such that $p_1= \pi_1h$ and $p_2= \pi_2h$:
\begin{displaymath}
    \xymatrix@=2em{
        {} & \D' \ar@/^1pc/[dddr]^{p_2} \ar@/_1pc/[dddl]_{p_1}
\ar@{-->}[dd]^{h} & {} \\
        \\
                {} & \D\ar@{}[r]|=\ar@{}[l]|= \ar[dr]_{\pi_2} \ar[dl]^{\pi_1} & {} \\
                \A \ar[dr]_{f} \ar@{}[rr]|\eeq& {} & \B  \ar[dl]^{g} \\
                {} & \C & {} \\
}
\end{displaymath}
Pseudo pullbacks are defined up to isomorphism.  Pseudo pullbacks coincide with pullbacks when the maps involved have an event structure as their common codomain. 
 
Fortunately we do have both pullbacks and pseudo pullbacks in the subcategory $\ESE^1$.  
The constructions of pullbacks and pseudo pullbacks in $\ESE^1$ can by-pass the complicated $\er$ construction and be done via the corresponding constructions in $\SFAME$ in the manner  familiar  from event structures and stable families.  This is because we have an adjunction from $\ESE^1$ to $\SFAME$ and moreover an adjunction which is enriched with respect the equivalence on homsets.  So, for example, to form the (pseudo) pullback of \ese's in $\ESE^1$ we regard their configurations  as stable ef's, form the (pseudo) pullback in $\SFAME$ and take the image under the right adjoint $\Pr$.  Each stable ef includes a subfamily of unambiguous configurations and it is fortunate indeed that \eg~the subfamily of unambiguous configurations of the pullback of stable ef's $f:\A\to \C$ and $g:\B\to\C$ 
 is got as the pullback in stable families of $f$ and $g$ between the subfamilies of unambiguous configurations.

\section{Conclusion}

This completes our exploration of extensions of event structures to support disjunctive and especially parallel causes. We summarise the models and adjunctions we have met in a figure.

	\begin{center}
	\begin{tabular}{p{5cm}p{9cm}}
	$\xymatrix{ \ESE \ar@/_/[rr]_{I}^{\top} \ar@/_/[d]_{ }&& \FAME \ar@/_/[ll]_{\er}\ar@/_/[rr]_{\col}^{\top} && \GES \ar@/_/[ll]_{\fame} \\ \ESE^2\ar@/_/[u]_{ }^{\vdash}\ar@/_/[d]_{ }&&&&\\ \ESE^1\ar@/_/[rr]_{ }^{\top} \ar@/_/[d]_{ }\ar@/_/[u]_{ }^{\vdash}&& \SFAME \ar@/_/[ll]_{\Pr} \ar@{^{(}->}[uu]\ar@/_/[dd]_{\unamb}\\ \ESE^0\ar@/_/[d]_{ }\ar@/_/[u]_{ }^{\vdash}&&&&\\ {\mathcal{E}}\ar@/_/[u]_{ }^{\vdash}\ar@/_/[rr]_{ }^{\top}&&\SFAM\ar@/_/[uu]_{ }^{\vdash}\ar@/_/[ll]_{\Pr} }$ &
	~\newline\newline\newline The adjunction\newline $\xymatrix{ \ESE \ar@/_/[rr]_{I}^{\top} && \FAME \ar@/_/[ll]_{\er}}$\newline is a proper pseudo adjunction. \newline\newline
	The adjunctions \newline $\xymatrix{ {\mathcal{E}} \ar@/_/[rr]_{ }^{\top} && \ESE^0 \ar@/_/[ll]_{} }$  and $\xymatrix{ \SFAM \ar@/_/[rr]_{ }^{\top} && \SFAME \ar@/_/[ll]_{\unamb} }$ \newline are not enriched in the sense that the natural bijection does not respect the equivalence $\eeq$ on maps.
	\end{tabular}
\end{center}

The adjunctions of the figure bridge between the two ``classical'' models of prime event structures $\mathcal{E}$ and general event structures $\mathcal{G}$; the former appropriate when each event has a unique cause, the latter permitting causes in parallel. The other ``classical'' model of stable families $\SFAM$ supports exclusive disjunctive causes; an event may be enabled in several, though incompatible, ways. Equivalence families $\FAME$ and its subcategories provide us with constructions of product and (pseudo) pullback. All the categories to the left of the figure, subcategories of \ese's $\ESE$, have the partial-total factorisation property used to support an operation of hiding. The category $\ESE^1$ (a.k.a.~$\EDC$) of \edc's distinguishes itself in also having pullbacks and pseudo pullbacks. It thus answered the original motivation for our search, to find a model in which to develop strategies with parallel causes. The category $\ESE^1$ supports hiding and probability, has pullbacks, pseudo pullbacks and through this leads to a robust definition of probabilistic strategies with parallel causes~\cite{CSL17,ecsym-notes}. The tools described here have been essential in carrying out that programme. While parallel causes are ubiquitous, their more formal treatment has been rather sparse. The techniques of this paper should be relevant wherever causal models allowing parallel causes are in use. \\

\noindent{\bf Acknowledgments}{ Many thanks to the anonymous referees. Thanks to  Simon Castellan, Pierre Clairambault, Ioana Cristescu, Mai Gehrke, Jonathan Hayman, Tamas Kispeter, Jean Krivine, Martin Hyland and Daniele Varacca for discussions, advice and encouragement; to ENS Paris for supporting Marc de Visme's internship; and to the ERC for Advanced Grant ECSYM.}

\bibliographystyle{alphaurl}
\bibliography{CALCO19-specialissue}

\newcommand{\etalchar}[1]{$^{#1}$}
\begin{thebibliography}{DFF{\etalchar{+}}12}

\bibitem[CCW17]{fscd17}
Simon Castellan, Pierre Clairambault, and Glynn Winskel.
\newblock Observably deterministic concurrent strategies and intensional full
  abstraction for parallel-or.
\newblock In {\em {FSCD}}, 2017.
\newblock \href {https://doi.org/10.4230/LIPIcs.FSCD.2017.12}
  {\path{doi:10.4230/LIPIcs.FSCD.2017.12}}.

\bibitem[CKV15]{CKV}
Ioana Cristescu, Jean Krivine, and Daniele Varacca.
\newblock Rigid families for {CCS} and the {pi-Calculus}.
\newblock In {\em ICTAC}, 2015.
\newblock \href {https://doi.org/10.1007/978-3-319-25150-9_14}
  {\path{doi:10.1007/978-3-319-25150-9_14}}.

\bibitem[Cri15]{ioana}
Ioana Cristescu.
\newblock {\em Operational and denotational semantics for the reversible
  pi-calculus}.
\newblock PhD thesis, PPS, Universit\'e Paris Diderot, 2015.

\bibitem[DFF{\etalchar{+}}12]{kappa}
Vincent Danos, Jerome Feret, Walter Fontana, Russell Harmer, Jonathan Hayman,
  Jean Krivine, Chris Thompson-Walsh, and Glynn Winskel.
\newblock {Graphs, Rewriting and Pathway Reconstruction for Rule-Based Models}.
\newblock In {\em FSTTCS}, 2012.
\newblock \href {https://doi.org/10.4230/LIPIcs.FSTTCS.2012.276}
  {\path{doi:10.4230/LIPIcs.FSTTCS.2012.276}}.

\bibitem[dV15]{MarcRep}
Marc de~Visme.
\newblock Cambridge internship report, {ENS Paris}.
\newblock {Available} from {Glynn Winskel'}s homepage
  http://www.cl.cam.ac.uk/$\sim$gw104/mdv-report.pdf, 2015.

\bibitem[dVW17]{CSL17}
Marc de~Visme and Glynn Winskel.
\newblock Strategies with parallel causes.
\newblock In {\em {CSL}}, 2017.
\newblock \href {https://doi.org/10.4230/LIPIcs.CSL.2017.41}
  {\path{doi:10.4230/LIPIcs.CSL.2017.41}}.

\bibitem[Kel82]{kelly}
G.~M. Kelly.
\newblock {\em Basic concepts of enriched category theory}.
\newblock CUP, 1982.
\newblock \href {https://doi.org/10.1112/BLMS/15.1.96}
  {\path{doi:10.1112/BLMS/15.1.96}}.

\bibitem[KP14]{Equiv-cat}
Yoshiki Kinoshita and John Power.
\newblock Category theoretic structure of setoids.
\newblock {\em TCS}, 546, 2014.
\newblock \href {https://doi.org/10.1016/j.tcs.2014.03.006}
  {\path{doi:10.1016/j.tcs.2014.03.006}}.

\bibitem[NPW81]{NPW}
Mogens Nielsen, Gordon Plotkin, and Glynn Winskel.
\newblock Petri nets, event structures and domains.
\newblock {\em TCS}, 1981.
\newblock \href {https://doi.org/10.1016/0304-3975(81)90112-2}
  {\path{doi:10.1016/0304-3975(81)90112-2}}.

\bibitem[Pea13]{Pearl}
Judea Pearl.
\newblock {\em Causality}.
\newblock CUP, 2013.
\newblock \href {https://doi.org/10.1111/J.1751-5823.2011.00149_16.X}
  {\path{doi:10.1111/J.1751-5823.2011.00149_16.X}}.

\bibitem[Pow98]{Power2-cats}
John Power.
\newblock 2-categories.
\newblock {\em BRICS Notes Series NS-98-7}, 1998.

\bibitem[Win80]{thesis}
Glynn Winskel.
\newblock {\em Events in computation}.
\newblock {Edinburgh University}, 1980.
\newblock PhD thesis, Edinburgh.

\bibitem[Win82]{icalp82}
Glynn Winskel.
\newblock Event structure semantics for {CCS} and related languages.
\newblock In {\em ICALP}. Full version available as University of Aarhus, CS
  Technical Report 1983, and from
  http://www.cl.cam.ac.uk/$\sim$gw104/eventStructures82.pdf, 1982.
\newblock \href {https://doi.org/10.1007/BFb0012800}
  {\path{doi:10.1007/BFb0012800}}.

\bibitem[Win86]{evstrs}
Glynn Winskel.
\newblock Event structures.
\newblock In {\em Advances in Petri Nets}, LNCS 255, 1986.
\newblock \href {https://doi.org/10.1007/3-540-18086-9}
  {\path{doi:10.1007/3-540-18086-9}}.

\bibitem[Win16]{ecsym-notes}
Glynn Winskel.
\newblock {\em ECSYM Notes: Event Structures, Stable Families and Concurrent
  Games}.
\newblock http://www.cl.cam.ac.uk/$\sim$gw104/ecsym-notes.pdf, 2016.

\bibitem[WN95]{WN}
Glynn Winskel and Mogens Nielsen.
\newblock Models for concurrency.
\newblock 1995.
\newblock \href {https://doi.org/10.1016/S0304-3975(96)80710-9}
  {\path{doi:10.1016/S0304-3975(96)80710-9}}.

\end{thebibliography}

\appendix

\section{{\bf Equiv}-enriched categories}\label{app:equiv-enriched}

Here we explain in more detail what we mean when we say ``enriched in the category of sets with equivalence relations'' and employ terms such as ``enriched adjunction,'' ``pseudo adjunction'' and ``pseudo pullback.''  The classic text on enriched categories is~\cite{kelly}, but for this paper the articles~\cite{Equiv-cat} and~\cite{Power2-cats} provide short, accessible introductions to the notions we use from Equiv-enriched categories and 2-categories, respectively. 

 $\Equiv$ is the category of {\em equivalence relations}.  Its objects are $(A,\eeq_A)$ comprising a set $A$ 
and an equivalence relation $\eeq_A$ on it.  Its maps 
$f:(A, \eeq_A)\to (B,\eeq_B)$ are total functions $f:A\to B$ which preserve equivalence.

We shall use some basic notions from enriched category theory~\cite{kelly}.  We shall be concerned with categories enriched in $\Equiv$, called  $\Equiv$-enriched categories, in which the homsets possess the structure of equivalence relations, respected by composition~\cite{Equiv-cat}. This is the sense in which we say categories are enriched in (the category of) equivalence relations. We similarly borrow the concept of  an $\Equiv$-enriched functor between $\Equiv$-enriched categories for a functor which preserves equivalence in acting on homsets.   An $\Equiv$-enriched adjunction  is a usual adjunction in which  the natural bijection of the adjunction preserves and reflects equivalence.

Because an object in $\Equiv$ can be regarded as a (very simple) category, we can regard  $\Equiv$-enriched categories as (very simple) 2-categories to which notions from 2-categories apply~\cite{Power2-cats}.  

A {\em pseudo functor} between $\Equiv$-enriched categories is like a functor but the usual laws only need hold up to equivalence. A {\em pseudo adjunction} (or biadjunction) between 2-categories permits a weakening of the usual natural isomorphism between homsets, now also categories,  to a natural equivalence of categories. In the special case of a pseudo adjunction between $\Equiv$-enriched categories the equivalence of homset categories amounts to a pair of $\eeq$-preserving functions  whose compositions are $\eeq$-equivalent to the identity function. With traditional adjunctions, by specifying the action of one adjoint solely on objects, we determine it as a functor; with pseudo adjunctions we can only determine it as a pseudo functor---in general a pseudo adjunction relates two pseudo functors. Pseudo adjunctions compose in the expected way. An $\Equiv$-enriched adjunction is a special case of a 2-adjunction between 2-categories and a very special case of pseudo adjunction. In Section~\ref{sec:adjesetoges} we compose an $\Equiv$-enriched adjunction with a pseudo adjunction to obtain a new pseudo adjunction.  

Similarly we can specialise the notions pseudo pullbacks and bipullbacks from 2-categories  to $\Equiv$-enriched categories which is highly relevant to the companion paper~\cite{CSL17} in which we use pullbacks and pseudo pullbacks to compose strategies with parallel causes.  Let $f: A\to C$  and $g:B\to C$ be two maps in an $\Equiv$-enriched category.  A {\em pseudo pullback} of $f$ and $g$ is an object $D$ and maps $p:D\to A$ and $q:D\to B$ such that $f\circ  p \eeq g\circ  q $ which satisfy the further property that for any  $D'$ and maps $p':D'\to A$ and $q':D'\to B$ such that $f\circ  p' \eeq g\circ  q' $, there is a unique map $h:D'\to D$ such that $p' = p\circ  h$ and $q' = q\circ  h$; note the insistence on the last two equalities, rather than just equivalences. There is an obvious weakening of pseudo pullbacks to the situation in which the uniqueness is replaced by uniqueness up to $\eeq$ and the equalities by $\eeq$---these are simple special cases of bilimits called  {\em bipullbacks}. 

Right adjoints in a 2-adjunction preserve pseudo pullbacks whereas right adjoints in a pseudo adjunction are only assured to preserve bipullbacks.

\section{On (pseudo) pullbacks of ese's}\label{app:pbsofeses}

We show that the enriched category of \ese's $\ESE$ does not always have pullbacks and pseudo pullbacks of maps $f:A\to C$ and $g:B\to C$,  the reason why we use the subcategory $\EDC$, which does, as a foundation on which to develop strategies with parallel causes.  It suffices to exhibit the lack of pullbacks when $C$ is an (ese of an) event structure as then pullbacks and pseudo pullbacks coincide.  Take $A$, $B$, $C$ as below, with the  obvious maps $f:A\to C$ and $g:B\to C$ (given by the lettering).  In fact, $A$ and $B$ are edc's.
\begin{center}
\begin{tabular}{ccccccc}
\ese~A & & \ese~B & & \ese~C & & \edc~P
\\$
\xymatrix@C=2em@R=2em{
\ve{a1} \ar@3{-}[r] & \ve{a2} \\
\ve{b1} \ar@3{-}[r] & \ve{b2} \\
\ve{c1} \ar@3{-}[r] \ar@{|>}[u] \ar@{|>}@/^2pc/[uu] & \ve{c2} \ar@{|>}[u] \ar@{|>}@/_2pc/[uu] \\
\ve{d} \ar@{|>}[u] & \ve{e} \ar@{|>}[u]}
$&$ \qquad \quad $&$
\xymatrix@C=1em@R=2em{
{} & \ve{a} & {} \\
{} & \ve{b} \ar@{|>}[u] & {} \\
{} & \ve{c} & {} \\
\ve{d} & {} & \ve{e}}
$&$ \qquad \quad $&$
\xymatrix@C=1em@R=2em{
{} & \ve{a} & {} \\
{} & \ve{b} & {} \\
{} & \ve{c} & {} \\
\ve{d} & {} & \ve{e}}
$&$ \qquad \quad $&$
\xymatrix@C=2em@R=2em{
\ve{a1} \ar@3{-}[r] & \ve{a2} \\
\ve{b1} \ar@3{-}[r] \ar@{|>}[u] & \ve{b2} \ar@{|>}[u] \\
\ve{c1} \ar@3{-}[r] \ar@{|>}[u] & \ve{c2} \ar@{|>}[u] \\
\ve{d} \ar@{|>}[u] & \ve{e} \ar@{|>}[u]}
$\\
\end{tabular}
\end{center}

The pullback in \edc's $\EDC$ does exist and is given by $P$ with the obvious projection maps. However this is not a pullback in $\ESE$.  Consider the \ese~D with the obvious total maps to $A$ and $B$; they  form a commuting square with $f$ and $g$. This cannot factor through $P$:   event   $b2$ has to be mapped to $b2$ in $P$, but then $a1$ cannot be mapped to $a1$ (it wouldn't yield  a map) nor to $a2$ (it would violate commutation required  of a pullback). 
\begin{center}
\begin{tabular}{ccccc}
\ese~D & & \ese~bP & & \ese~E 
\\$
\xymatrix@C=2em@R=2em{
\ve{a1} & {} \\
\ve{b1} \ar@3{-}[r] & \ve{b2} \ar@{|>}[lu] \\
\ve{c1} \ar@3{-}[r] \ar@{|>}@/^2pc/[uu] \ar@{|>}[u] & \ve{c2} \ar@{|>}[u] \\
\ve{d} \ar@{|>}[u] & \ve{e} \ar@{|>}[u]}
$&$ \qquad \quad $&$ 
\xymatrix@=2em{
\ve{a1} \ar@3{-}[r] & \ve{a2'} \ar@3{-}[r] & \ve{a1'} \ar@3{-}[r] & \ve{a2} \\
{} & \ve{b1} \ar@3{-}[r] \ar@{|>}[ul] \ar@{|>}[u] & \ve{b2} \ar@{|>}[u] \ar@{|>}[ur] & {} \\
{} & \ve{c1} \ar@3{-}[r] \ar@{|>}[u] \ar@{|>}[uur] & \ve{c2} \ar@{|>}[u] \ar@{|>}[uul] & {} \\
{} & \ve{d} \ar@{|>}[u] & \ve{e} \ar@{|>}[u] & {}}
$&$ \qquad \quad $&$
\xymatrix@C=2em@R=2em{
\ve{a1} & {} \\
\ve{b1} \ar@3{-}[r] \ar@{|>}[u] & \ve{b2} \ar@{|>}[ul] \\
\ve{c1} \ar@3{-}[r] \ar@{|>}[u] & \ve{c2} \ar@{|>}[u] \\
\ve{d} \ar@{|>}[u] & \ve{e} \ar@{|>}[u]}
$\\
\end{tabular}
\end{center}

There is a bipullback $bP$ got by applying the pseudo functor $\er$ to the pullback in \ef's. But this is not a pullback because  in the \ese~$E$ the required mediating map is not unique in that  $a1$ can go to either $a1$ or $a1'$. In fact, there is no pullback of $f$ and $g$.  To show this we use the additional \ese~F. 
\begin{center}
\begin{tabular}{p{11cm} p{3cm}}
Suppose $Q$ with projection maps to $A$ and $B$ were a pullback of $f$ and $g$ in $\ESE$. Consider the three \ese's  $D$, $E$ and $F$ with their obvious maps to $A$ and $B$; in each case they form a commuting square with $f$ and $g$. There are three unique maps $h_D : D \to Q$, $h_E : E \to Q$, and $h_F : F \to Q$ such that the corresponding pullback diagrams commute. We remark that there are also  obvious maps $k_D : E \to D$ and   $k_F : E \to F$ (given by the lettering) which commute with the maps to the components $A$ and $B$. By uniqueness, we have $h_D \circ k_D = h_E = h_F \circ k_F$, so we have $h_D(a1) = h_F(a1)$. From the definition of the maps, the event $h_D(a1) = h_F(a1)$ has at most one
$\leq$-predecessor in $Q$  which is sent to $b$ in $C$ (as $D$ only has one). Because of the projection to $B$, it has at least one (as $B$ has one). So the event $h_D(a1) = h_F(a1)$ has exactly one predecessor which is sent to $b$. From the definition of maps, this event is $h_D(b2)$ which equals $h_F(b1)$. But $h_D(b2)$ cannot equal $h_F(b1)$ as they go to two different events of $A$ ---a contradiction.
&
\centering \ese~F \newline \newline
$\xymatrix@C=2em@R=2em{
\ve{a1} & {} \\
\ve{b1} \ar@3{-}[r] \ar@{|>}[u] & \ve{b2}\\
\ve{c1} \ar@3{-}[r] \ar@{|>}[u] & \ve{c2} \ar@{|>}[u] \\
\ve{d} \ar@{|>}[u] & \ve{e} \ar@{|>}[u]
}$ \\
\end{tabular}
\end{center}
Hence there can be no pullback of $f$ and $g$ in $\ESE$. (By adding   intermediary events, we would encounter essentially the same example in the composition, before hiding, of strategies if they were to be developed within the broader category of \ese's.)

\section{General event structures as \ese's}\label{secn:gesasese}

The pseudo adjunction
\[
\xymatrix{
\ESE  \ar@/_/[rr]_{ }^{\top} &&  \GES \ar@/_/[ll]_{ }\,.
}
\]
 cuts down to a reflection, in which the counit is a natural isomorphism, when we restrict to the subcategory of $\GES$ where all general event structures are replete.  The right adjoint provides a full and faithful embedding of replete general event structures (and so families of configurations) in \ese's.   

We can ask on what subcategory of $\ESE$ the pseudo adjunction further cuts down to a pseudo equivalence  with the category of replete general event structures.    We do this by  characterising those \ese's 
which are obtained to within isomorphism as images of replete general event structures under the right adjoint, or equivalently as images of families of configurations.    

The characterising axioms on an \ese~$(P,\leq, \Con,\eeq)$ are:

\begin{itemize}[align=left]
\item[(A)]
For $X$ a finite  down-closed subset of $P$,\\
 $X\eeq y \ \&\ y \in \conf P \implies X\in\conf P$\,;
\item[(B)]
For $p, q\in P$, \ 
$[p) = [q)\ \&\ p \equiv q \implies p = q$\,;
\item[(C)] 
For $X$ a   down-closed subset of $P$ and $p\eeq q$,\\
$X\subseteq [p) \ \&\ [q)_\eeq \subseteq X_\eeq \implies X= [p)$\,; 
\item[(D)]
For $x\in\conf P$ and $t\in P$,\\
$x\cup [t] \in\conf P \ \&\ (x\cup[t])_\eeq = x_\eeq \cup \setof{\setof t_\eeq} \implies
\exists p\in P.\  p\eeq t \ \&\ x\cup\setof p\in\conf P$\,.
\end{itemize}

\noindent
In writing the axioms we have used expressions such as $X\eeq Y$, for subsets $X$ and $Y$ of $P$,  to mean for any $p\in X$ there is $q\in Y$ with $p\eeq q$ and {\it vice versa};  and $X_\eeq$ to stand for the set of $\eeq$-equivalence classes $\set{\setof p_\eeq}{p\in X}$; so $X\eeq Y$ iff $X_\eeq = Y_\eeq$.

Axiom (D) may be replaced by 
\begin{itemize}[align=left]
\item[(D$'$)]
For $x,y\in\conf P$ and $t\in P$,\\
$x\longcov t \ \&\ x\eeq y  \implies
\exists p\in P.\ p\eeq t \ \&\ x\cup\setof p\in\conf P$\,.
\end{itemize} 
Assume (D) and, for $x,y\in\conf P$, that $x\longcov t$ and $x\eeq y$.  Then, by (A),  $y\cup [t] \in\conf P$ as $y\cup [t]\eeq x\cup \setof t$, clearly consistent; whence $y\cup \setof p\in \conf P$ for some $p$ by (D). Conversely, assuming (D$'$) and $x\cup [t] \in\conf P$ and $(x\cup[t])_\eeq = x_\eeq \cup \setof{\setof t_\eeq}$, in the case where $t\notin x$ we obtain $x\cup [t) \longcov t$ and $x\cup [t) \eeq x$; whence $x \cup \setof p\in \conf P$ for some $p$ by (D$'$).  This shows $(D)$ follows from (D$'$) in the case when $t\notin x$; in the case when $t\in x$, axiom (D) is obvious.  

\begin{thm}\label{thm:ges-as-ese}
Let $P\in\ESE$.  Then, 
$P\iso \er(\A)$ for some equivalence family $\A$  
 iff $P$ satisfies axioms (A), (B), (C) and (D).
\end{thm}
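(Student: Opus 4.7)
My plan is to prove each direction of the biconditional separately, making full use of the extremal realisation machinery and Proposition~\ref{prop:domianofrealzns}.

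For the ``only if'' direction, I assume $P = \er(\A)$ for some ef $\A$. Events of $P$ are then chosen representatives of iso-classes of prime extremal realisations of $\A$; $\max$ sends an event to the image of its top in $A$; $\leq_P$ is rigid embedding (Corollary~\ref{cor:extpreceqextrigid}); $\Con_P$ is controlled by $\max$; and $\eeq_P$ is $\max$-equivalence. I would verify each axiom individually from this description. For (A), given a down-closed $X \subseteq P$ whose $\eeq$-quotient equals that of a configuration $y$, I would argue that $\max : (X, \preceq) \to A$ is itself a prime extremal realisation (assembled from the prime extremals that are its elements), so $\max X \in \A$ and $X \in \conf P$. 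For (B), equal strict predecessors together with $\eeq_P$-equivalence of tops yield isomorphic prime extremals of $\A$, so by the representative choice $p = q$. For (C), a violating $X$ would yield a proper total-map quotient of the extremal with carrier $[p]$, contradicting extremality by Lemma~\ref{lem:charnofextremals}(i). For (D), I would invoke Lemma~\ref{lem:existextr} to coarsen the realisation with carrier $x \cup [t]$ to an extremal; its top provides the required event $p \eeq t$ with $x \cup \setof p \in \conf P$.

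For the ``if'' direction, take $\A \eqdef I(P) = (\iconf P, \eeq_P)$ and analyse the unit $\eta_P : P \to \er(I(P))$, which sends $p$ to the prime extremal $([p], \leq_P) \hookrightarrow P$ (as in Proposition~\ref{prop:extremalsofprevstrsareconfigigs}). The plan is to show $\eta_P$ is an isomorphism of \ese's, giving $P \iso \er(\A)$. Preservation of $\eeq$, $\leq$ and $\Con$ at the level of \ese\ maps is immediate from the construction of $I$, so the work is in showing $\eta_P$ is a bijection on events and that its inverse is also a map. Injectivity uses axiom~(B): if $\eta_P(p) = \eta_P(q)$ as representatives then their carriers force $[p) = [q)$ and their tops force $p \eeq_P q$, hence $p = q$. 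Surjectivity, the main obstacle, I would prove by induction on the size of the carrier $R$ of a given prime extremal $\rho : (R, \leq) \to P$ of $I(P)$ with top $r_t$: the inductive hypothesis handles each proper down-closed sub-extremal of $\rho$; axiom~(D$'$) applied to the configuration $\rho([r_t))$ and the event $\rho(r_t)$ produces a candidate $p \in P$ with $p \eeq_P \rho(r_t)$ and $\rho([r_t)) \cup \setof p \in \conf P$; axiom~(C) then pins down $[p) = \rho([r_t))$, forcing $\eta_P(p) \iso \rho$. Reflection of consistency by $\eta_P^{-1}$ reduces to axiom~(A), and reflection of the order follows once $[p)$ has been identified via~(C).

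The main obstacle is the surjectivity induction: axioms~(C) and (D$'$) must conspire to produce exactly the event of $P$ whose $\eta_P$-image matches a given prime extremal of $I(P)$, and the induction has to respect not just the underlying set of the carrier but also its partial order. The forward direction's verification of~(D) via Lemma~\ref{lem:existextr} and the converse's use of~(D$'$) in the inductive step are the delicate parts; the remaining clauses are largely rearrangements of Lemmas~\ref{lem:extperfect} and~\ref{lem:charnofextremals}.
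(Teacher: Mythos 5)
There is a genuine gap, and it lies in your choice of the witnessing family $\A$ in the ``if'' direction. You take $\A = I(P) = (\iconf P, \eeq_P)$ and set out to show the unit $\eta_P : P \to \er(I(P))$ is an isomorphism. But that unit is an isomorphism for \emph{every} ese $P$, with no appeal to (A)--(D): by Proposition~\ref{prop:extremalsofprevstrsareconfigigs} the prime extremals of $\iconf P$ are exactly the realisations $[p]\hookrightarrow P$, so your surjectivity induction is solving a non-problem. Taken literally, your argument would make the ``if'' direction vacuous and would force, via the ``only if'' direction, every ese to satisfy (A)--(D) --- which is false (e.g.\ (A) fails for an ese with three concurrent events $a\eeq a'$, $b$ where $\setof{a,b}$ is consistent but $\setof{a',b}$ is not). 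The theorem is a characterisation of the ese's arising from \emph{families of configurations} (equivalently, replete general event structures), so the quantifier ``for some equivalence family $\A$'' must range over families carrying the identity equivalence; $I(P)$ is not an admissible witness. The paper instead takes $\A = \iconf{\col(\iconf P,\eeq)}$, the configurations of the \emph{collapse} of $P$, and the substance of the converse is showing that unfolding the collapse recovers $P$: the candidate isomorphism sends $p$ to the realisation on $[p]$ mapping events to their $\eeq$-classes, with surjectivity proved by induction on depth using (D) to extend and (B), (C) via Lemma~\ref{lem:charnofextremals} to certify extremality. Your induction is structurally close to this, but it is aimed at the wrong family, and it is only over the collapsed family that the axioms do any work.

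The same issue undermines your ``only if'' direction. For an arbitrary ef $\A$, axiom (A) is simply false of $\er(\A)$: take $\A$ with configurations $\emptyset, \setof a, \setof b, \setof{b,a'}$ and $a\eeq a'$; in $\er(\A)$ the down-closed set $\setof{p_a,p_b}$ is $\eeq$-equivalent to the configuration $\setof{p_b,p_{a'}}$ yet is inconsistent, since $\setof{a,b}\notin\A$. Your argument that $\max:(X,\preceq)\to A$ is an extremal realisation ``assembled from the prime extremals that are its elements'' is circular --- for it to be a realisation at all one needs $\max X\in\A$, which is precisely what is to be proved (and by Proposition~\ref{prop:domianofrealzns} a down-closed subset of $P$ only yields a realisation when it is a \emph{configuration}, i.e.\ is also consistent). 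When $\A$ carries the identity equivalence the point is immediate, as in the paper: $X\eeq y$ then forces $\max X = \max y\in\A$. Your sketches for (B) and (C) are essentially the paper's. For (D), note additionally that an extremal coarsening of the realisation on $x\cup[t]$ obtained from Lemma~\ref{lem:existextr} need not have a top at all; one must keep $x$ fixed and coarsen only the enabling of $t$, as in the proof of the main adjunction theorem.
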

\begin{proof} 
We show axioms (A), (B), (C), (D) hold of any 
\ese~$P = \er(\A)$, constructed from a 
family of configurations $\A$.  
We obtain $P$ satisfies axiom (A) 
from the way the consistency of  $\er(\A)$ is defined:  if $X\eeq y$, with $y$ a configuration, $X$ inherits consistency from $y$ ensuring that $X$, assumed down-closed, is a configuration. 
If $[p) = [q)$ and
$p \equiv q$, then $p$ and $q$ correspond to the same extremal realisation with top,
so are equal---ensuring (B) holds of $P$.
We obtain (C) via  Lemma~\ref{lem:charnofextremals}(i), as $[p]$ corresponds to an extremal with top $p$. 
Given the correspondence between configurations of $P$ and extremal realisations, axiom (D) expresses an obvious extension property of extremal realisations.

Conversely, we now show that if an \ese~$P=(P,\Con, \leq, \eeq)$ satisfies  (A), (B), (C), (D) then  there is an isomorphism  
  \[\eta_P:P \iso \er(\A
)\]  
if 
we take the family of configurations so
\[\A = \iconf{\col(\iconf P,\eeq)}\,.\] 
Recall, from
Proposition~\ref{prop:domianofrealzns}, that
the configurations of $\er(\A)$ correspond to extremal realisations of  $\col(\iconf P,\eeq)$. 

Before we define the map $\eta_P$ we remark that a configuration  $x$ of $P$  determines an extremal realisation of $\col(\iconf P,\eeq)$: the realisation has carrier $x$ with order inherited from $P$ and map taking $p\in x$ to the equivalence class $\setof p_\eeq$.  Axioms (B) and (C) ensure that this realisation is extremal, via Lemma~\ref{lem:charnofextremals}. 

It follows from the remark that we define a map $\eta_P:P \to \er(\A)$ by sending $p\in P$ to the realisation with carrier 
$[p]$, ordered as in $P$, and function  $[p]\to P_\eeq$ taking elements to their equivalence classes. 
The injectivity of $\eta_P$ follows from (B). Moreover $\eta_P$ reflects consistency because of axiom (A).  We now only require its surjectivity to ensure $\eta_P$ is an isomorphism.

We use (D) in showing that $\eta_P$ is surjective.  We show by induction on $n\in\omega$ that all extremal realisations with top of $\col(P)$ of depth less than $n$ are in the image of $\eta_P$.  (Recall the depth of an event in an event structure is the length of a longest $\leq$-chain up to it; we take the depth of an event structure to be the maximum depth of its events.)  Because $\eta_P$ reflects consistency the induction hypothesis entails that all extremal realisations  of depth less than $n$ are (up to isomorphism) in the image under $\eta_P$ of configurations of $P$.  

Let $(R, \leq_R)$ of depth $n$ with $\rho:R\to \col(P)$ be an extremal realisation with top $r$, so $R=[r]_R$. 
Then its restriction $\rho': [r)_R \to \col(P)$ is an extremal realisation of lesser depth.  By induction there is $x'\in\conf P$ and an isomorphism of realisations $\theta':  \rho' \iso \eta_P x'$.  Write $y\eqdef \rho' [r)_R$, $z\eqdef \rho [r]_R$.  Then $y, z\in \conf{\col(P)}$ and $y\longcov e z$ for some $e\in P_{\eeq}$.  From the definition of $\col(P)$, it follows fairly directly that there is some $t\in P$ s.t.~$\setof t_{\eeq} = e$ and $[t)_\eeq \subseteq y$.  As $\eta_P$ reflects consistency, $x'\cup [t] \in\conf P$.  We have
\[
(x'\cup [t])_\eeq = x'_\eeq \cup \setof{\setof t_\eeq} = z\,.
\]
By (D) there is some $p\in P$ s.t.~$p\eeq t$ and $x'\cup\setof p\in\conf P$.  The configuration $x\eqdef x'\cup\setof p$ with order inherited from $P$ and map taking $p'\in x $ to $\setof{p'}_\eeq$ is the realisation $\eta_P x$.
Let $\theta$ be the function $\theta:R\to x $ extending $\theta'$ s.t.~$\theta(r) = p$.  Then 
$\theta: \rho \succeq \eta_P x$ is a map of realisations.  But $\rho$ is extremal ensuring $\theta:  \rho \iso \eta_P x$, and that $\eta_P$ is surjective. 
 \end{proof}
 
 \begin{cor}
 The pseudo adjunction from $\ESE$ to $\GES$ cuts down to a pseudo equivalence of categories between the subcategory of $\ESE$ satisfying axioms (A), (B), (C), (D) and the subcategory of $\GES$ comprising the replete general event structures.
  \end{cor}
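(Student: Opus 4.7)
The plan is to show that the composite pseudo adjunction $\col\circ I \dashv \er\circ \fame$ between $\ESE$ and $\GES$ restricts to a pseudo equivalence between the full subcategory of \ese's satisfying axioms (A)--(D) and the full subcategory of replete general event structures. For this I verify four things: that each adjoint maps into the appropriate restricted subcategory on the other side, and that the unit and counit are pseudo-isomorphisms when so restricted.

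For closure on the right, the composite right adjoint sends $E\in\GES$ to $\er(\fame E) = \er(\iconf E, =)$, which is manifestly of the form $\er(\A)$ and hence satisfies axioms (A)--(D) by Theorem~\ref{thm:ges-as-ese}. Dually, the composite left adjoint sends $P\in\ESE$ to $\col(I P) = \col(\iconf P,\eeq)$, which is replete by the remark immediately following the definition of $\col$ in Section~\ref{sec:adjesetoges}.

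For the unit, its component at $P$ is (up to natural identifications) exactly the map $\eta_P : P \to \er(\A)$ with $\A = \fame(\col(I P))$ constructed in the proof of Theorem~\ref{thm:ges-as-ese}; that proof shows $\eta_P$ is an isomorphism precisely when $P$ satisfies (A)--(D). For the counit at $E$, the component decomposes as
\[
\col(I(\er(\fame E))) \longrightarrow \col(\fame E) \longrightarrow E,
\]
the first arrow being $\col$ applied to the $I\dashv\er$ counit at $\fame E$ and the second being the $\col\dashv\fame$ counit at $E$. When $E$ is replete, the second factor is an isomorphism by the reflection of replete $\GES$ into $\FAME$ recalled in Section~\ref{sec:adjesetoges}. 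The first factor is also an iso: the counit of $I\dashv\er$ at $\fame E$ sends a prime extremal $p$ to $\max p\in E$, and since $\fame E$ carries the identity equivalence, the equivalence on $\er(\fame E)$ identifies exactly those primes with equal $\max$; so after collapsing under $\col$ we obtain a bijection on events. Surjectivity on events uses Lemma~\ref{lem:existextr}, which produces for each $e\in E$ a prime extremal with top $e$ (every such $e$ appears in some configuration because $E$ is replete), while preservation and reflection of consistency and enabling reduce to routine unpackings of the definitions of $\col$ and $\er$.

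Once closure is established and unit and counit are pseudo-iso on the restricted subcategories, standard 2-categorical bookkeeping delivers the pseudo equivalence of categories. I expect the counit step to be the main technical hurdle: showing that $\col$ of the $I\dashv\er$ counit is genuinely an isomorphism of general event structures, and not merely a bijection on events, needs careful tracking of how consistency and enabling in $\er(\A)$ transport through $\col$ under the function $\max$, and this is precisely where repleteness of $E$ is used to supply enough prime extremals.
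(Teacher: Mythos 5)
Your argument is correct and is essentially the one the paper intends: the corollary is left implicit there, resting on Theorem~\ref{thm:ges-as-ese} for the unit being an isomorphism on objects satisfying (A)--(D), and on the already-established fact from Section~\ref{sec:adjesetoges} that the counit of the composite pseudo adjunction is an isomorphism at replete general event structures. Your explicit decomposition of the counit through $\col$ of the $I\dashv\er$ counit and the $\col\dashv\fame$ counit, together with the closure checks, just spells out the bookkeeping the paper omits.
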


\subsection{General event structures as \edc's}\label{secn:gesasedc}
The composite
\[\xymatrix{
\ESE^1\ar@/_/[rr]_{ }^{\top} &&  \ar@/_/[ll]_{} \ESE \ar@/_/[rr]_{I}^{\top} &&  \FAME \ar@/_/[ll]_{\er}\ar@/_/[rr]_{\col}^{\top} &&   \GES  \ar@/_/[ll]_{\fame}
}
\]
forms pseudo adjunction from \edc's to general event structures; we call its composite right adjoint $\hbox{{\it \edc}}$. 
 As above we can cut this down to a pseudo equivalence between the  subcategory $\ESE^1$ of edc's and replete general event structures via a slight modification of Axiom (D) on \ese's~$(P,\leq, \Con,\eeq)$:

\begin{itemize}[align=left]
\item[(D$_1$)]
For $x$ an unambiguous configuration in $\conf P$ and $t\in P$,\\
$x\cup [t] \in\conf P \ \&\ (x\cup[t])_\eeq = x_\eeq \cup \setof{\setof t_\eeq} \implies
\exists p\in P.\  p\eeq t \ \&\ x\cup\setof p\in\conf P$\,.
\end{itemize}
Axiom (D$_1$) may be replaced by 
\begin{itemize}[align=left]
\item[(D$_1'$)]
For $x,y\in\conf P$, with $y$ unambiguous, and $t\in P$,\\
$x\longcov t \ \&\ x\eeq y  \implies
\exists p\in P.\ p\eeq t \ \&\ x\cup\setof p\in\conf P$\,.
\end{itemize}

\begin{thm}\label{thm:ges-as-edc}
Let $P$ be an edc.  Then, 
$P\iso \edc(G)$ for some general event structure $G$  
 iff $P$ satisfies axioms (A), (B), (C) and (D$_1$).  
\end{thm}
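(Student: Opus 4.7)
The plan is to mirror the proof of Theorem~\ref{thm:ges-as-ese}, exploiting the fact that in an edc axiom Ax1 forces every principal down-set $[p]$ to be an unambiguous configuration; correspondingly, the events of $\edc(G)$ correspond to exactly the \emph{injective} prime extremal realisations of $\fame(G)$ (those extremals within which Ax1 holds).

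For the ``only if'' direction, assume $P \iso \edc(G)$. The justifications of (A), (B), (C) given in the proof of Theorem~\ref{thm:ges-as-ese} depend only on the correspondence between events of an \ese{} and extremal realisations with top, together with Lemma~\ref{lem:charnofextremals}, and remain valid after restricting to injective extremals. For (D$_1$), let $x \in \conf P$ be unambiguous with $x \cup [t] \in \conf P$ and $(x \cup [t])_\eeq = x_\eeq \cup \setof{\setof t_\eeq}$. Unambiguity of $x$ means the corresponding extremal realisation in $\fame(G)$ is injective; extending its carrier by a new top element sent to the equivalence class of $t$ and coarsening via Lemma~\ref{lem:existextr} yields a new injective prime extremal $p$ with $p \eeq t$ and $x \cup \setof p \in \conf P$.

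For the ``if'' direction, set $G \eqdef \col(\iconf P, \eeq)$ and define $\eta_P : P \to \edc(G)$ by sending $p$ to the realisation with carrier $([p], \leq)$ and function $p' \mapsto \setof{p'}_\eeq$. Since $P$ is an edc, Ax1 makes each $[p]$ unambiguous, so this realisation is injective, and (B), (C) together with Lemma~\ref{lem:charnofextremals} make it extremal; hence it is an event of $\edc(G)$. Injectivity of $\eta_P$ follows from (B), and (A) ensures $\eta_P$ reflects consistency. The substantive step, and the only place (D$_1$) is invoked, is proving $\eta_P$ surjective. Proceed by induction on the depth of an injective prime extremal $\rho : R \to \col(P)$ with top $r$: the restriction $\rho' : [r)_R \to \col(P)$ is an injective extremal of smaller depth, so $\rho' \iso \eta_P(x')$ for some $x' \in \conf P$, and crucially injectivity of $\rho'$ together with Ax1 forces $x'$ to be unambiguous. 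Unpacking the cover from $\rho'[r)_R$ to $\rho[r]_R$ in $\col(P)$ yields some $t \in P$ with $[t)_\eeq \subseteq x'_\eeq$ and $x' \cup [t] \in \conf P$ (using that $\eta_P$ reflects consistency together with (A)), while $(x' \cup [t])_\eeq = x'_\eeq \cup \setof{\setof t_\eeq}$; applying (D$_1$) to the unambiguous $x'$ supplies $p \eeq t$ with $x' \cup \setof p \in \conf P$, and extremality of $\rho$ then forces $\rho \iso \eta_P(x' \cup \setof p)$.

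The main obstacle is maintaining unambiguity along the induction so that (D$_1$) remains applicable; this reduces to the clean identification, in an edc, of injective extremal realisations with those whose carrier image is unambiguous, a direct consequence of Ax1. Once this is in hand, the surjectivity argument closes and the composite pseudo adjunction restricts to a pseudo equivalence between $\EDC$ satisfying (A), (B), (C), (D$_1$) and the replete general event structures, paralleling the corollary following Theorem~\ref{thm:ges-as-ese}.
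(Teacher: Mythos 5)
Your proposal is correct and takes essentially the same route as the paper: the paper's own proof of Theorem~\ref{thm:ges-as-edc} is stated only as ``a slight refinement of the proof of Theorem~\ref{thm:ges-as-ese},'' and your write-up supplies exactly that refinement — identifying, via Ax1, the events of $\edc(G)$ with the injective prime extremals, and tracking unambiguity of $x'$ through the depth induction so that (D$_1$) can replace (D) in the surjectivity step.
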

\begin{proof} The proof is a slight refinement of the proof of  Theorem~\ref{thm:ges-as-ese} above. \end{proof}
 
 \begin{cor}
 The pseudo adjunction from  $\ESE$ to $\GES$ cuts down to a pseudo equivalence of categories between the subcategory of \edc's $\ESE^1$ satisfying axioms (A), (B), (C), (D$_1$) and the subcategory of $\GES$ comprising the replete general event structures.
  \end{cor}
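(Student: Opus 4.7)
The plan is to derive this pseudo equivalence directly from Theorem~\ref{thm:ges-as-edc} together with properties of the pseudo adjunction $\ESE\leftrightarrows\GES$ already recorded in Section~\ref{sec:adjesetoges}. The guiding principle is standard: a (pseudo) adjunction $L\dashv R$ restricts to a (pseudo) equivalence between the full subcategories on which the unit and counit are respectively (pseudo) isomorphisms, and these subcategories are exactly the essential images of $R$ and $L$.

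First I would pin down the two essential images. On the $\GES$ side, the left adjoint $L = \col\circ I$ always lands in replete general event structures, since $\col(\A)$ is replete by construction, as noted immediately after the definition of $\col$. On the $\ESE^1$ side, the ``only if'' direction of Theorem~\ref{thm:ges-as-edc} says that every $\edc(G)$ satisfies axioms (A), (B), (C), (D$_1$), so the right adjoint factors through the full subcategory of edc's satisfying these four axioms.

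Next I would verify that unit and counit are (pseudo) isomorphisms on the respective subcategories. The counit was shown in Section~\ref{sec:adjesetoges} to be iso at every replete $G$, and this is unaffected by composing on the left with the coreflection $\ESE\hookrightarrow\ESE^1$. For the unit $\eta_P\colon P\to\edc(\col(I(P)))$ at $P$ satisfying (A), (B), (C), (D$_1$), the ``if'' direction of Theorem~\ref{thm:ges-as-edc} (following the template of Theorem~\ref{thm:ges-as-ese}) constructs an explicit isomorphism $P\iso\edc(G_P)$ with $G_P = \col(\iconf P,\eeq) = L(P)$. A short triangle-identity argument, combined with the counit being iso at the replete $G_P$, identifies this constructed isomorphism with $\eta_P$ up to $\eeq$.

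The main obstacle is precisely this identification step: verifying that the ad hoc isomorphism produced in the proof of Theorem~\ref{thm:ges-as-edc} agrees $\eeq$-wise with the canonical unit $\eta_P$. This is routine once the definitions are unwound --- each prime $p\in P$ is sent, under either map, to the prime extremal realisation with carrier $[p]$ ordered as in $P$ and labelled by $\eeq$-classes --- but it is the only place requiring genuine verification rather than citation of earlier results. With that in hand, standard enriched pseudo-adjunction bookkeeping delivers the desired pseudo equivalence.
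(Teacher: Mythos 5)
Your proposal is correct and follows essentially the route the paper intends: the corollary is left as an immediate consequence of Theorem~\ref{thm:ges-as-edc}, whose ``only if'' direction identifies the essential image of the right adjoint and whose ``if'' direction exhibits the map $\eta_P$ (already the unit, by the paper's own notation) as an isomorphism, while repleteness characterises where the counit is iso. Your identification step of the constructed isomorphism with the unit is the right thing to flag, and it is indeed routine since the theorem's map is defined exactly as the unit component.
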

 
 \end{document}